\newtheorem{lem}{Lemma}[section]
\newtheorem{prop}{Proposition}[section]
\newtheorem{prob}{Problem}[section]
\newtheorem{assum}{Assumption}[section]
\newtheorem{rem}{Remark}[section]
\renewcommand{\appendix}{\par
  \setcounter{section}{0}
  \setcounter{subsection}{0}
  \gdef\thesection{\Alph{section}}
}
\newcommand{\bfA}{\mathbf{A}}
\newcommand{\bfC}{\mathbf{C}}
\newcommand{\bfT}{\mathbf{T}}
\newcommand{\R}{\mathbb{R}}
\renewcommand{\t}{\mathsf{T}}
\newcommand{\bbm}[1]{\left[\begin{matrix} #1 \end{matrix}\right]}
\newcommand{\sbm}[1]{\left[\begin{smallmatrix} #1
   \end{smallmatrix}\right]}
\tikzstyle{block} = [draw,rectangle,rounded corners=0.5mm,thick,minimum height=0.8cm,minimum width=0.8cm,fill=gray!10,draw=black!50,align=center]
\tikzstyle{blocke} = [draw,rectangle,rounded corners=0.5mm,thick,minimum height=0.7cm,minimum width=0.8cm,fill=gray!10,draw=black!50,align=center]
\tikzstyle{sum} = [draw,circle,inner sep=0mm,minimum size=2mm,thick,fill=white!40,draw=black!50]
\tikzstyle{dot} = [draw,circle,inner sep=0mm,minimum size=0.5pt,thick,fill=black,draw=black]
\journal{Automatica}
\begin{document}

\begin{frontmatter}

\title{{\Large\bf Velocity-free task-space regulator for robot manipulators with external disturbances
}}

\tnotetext[t1]{The work was supported in part by the Natural Science Foundation of China under Grant No. 62073168, in part by the NXTGEN High-Tech program on Autonomous Factory, and in part by the InnoHK Clusters of the Hong Kong SAR Government via the Hong Kong Centre for Logistics Robotics}

\author[cuhk]{Haiwen Wu}\ead{haiwenwu@cuhk.edu.hk}
\author[rug]{Bayu Jayawardhana}\ead{b.jayawardhana@rug.nl}
\author[scut]{Dabo Xu}\ead{dxu@scut.edu.cn}
\address[cuhk]{Department of Mechanical and Automation Engineering, The Chinese University of Hong Kong, Hong Kong Special Administrative Region}
\address[rug]{Engineering and Technology Institute Groningen, Faculty of Science and Engineering, University of Groningen, Groningen 9747 AG, the Netherlands}
\address[scut]{Shien-Ming Wu School of Intelligent Engineering, South China University of Technology, Guangzhou 511442, China.}  	

\begin{keyword}                           
Velocity-free control, internal model principle, disturbance rejection, passivity-based control 
\end{keyword}

\begin{abstract}   
This paper addresses the problem of task-space robust regulation of robot manipulators subject to external disturbances. A velocity-free control law is proposed by combining the internal model principle and the passivity-based output-feedback control approach. The resulting controller not only ensures asymptotic convergence of the regulation error but also rejects unwanted external sinusoidal disturbances. The potential of the proposed method lies in its simplicity, intuitiveness, and straightforward gain selection criteria for the synthesis of multi-joint robot manipulator control systems. 
\end{abstract}
                          
\end{frontmatter}

\section{Introduction}\label{sec1}
Control of multi-joint robotic systems has been an active research area in both
robotics and control communities for over three decades. Among various
challenges, an interesting topic involves effectively mitigating external
disturbances and/or measurement noise to achieve high-precision control
performance. In many mechanical control applications, systems often encounter sinusoidal or periodic disturbances arising from rotational elements such as motors and vibratory components~\cite{Zarikian2007harmonic,Tomizuka2008dealing}. The presence of
such disturbances motivates the use of the internal model
principle for disturbance rejection~\cite{Francis1976internal}, which states that regulation can be
achieved only if the feedback controller incorporates an augmented system that
is a copy of the exogenous system responsible for generating the sinusoidal
disturbances. This principle was thoroughly studied for linear systems in seminal works~\cite{Davison1976robust,Francis1976internal} and later 
generalized to address the nonlinear output regulation
problem~\cite{Isidori1990output,Serrani2001semi,BI2003tac,Huang2004general,Bayu2014}.
We refer to~\cite{Bin2022internal} for a comprehensive recent survey on this
subject. 
In recent years, the internal model principle has been employed to control
Euler--Lagrange systems subject to sinusoidal external disturbances. Several
linear internal model-based controllers were proposed
in~\cite{Chen1997adaptive,Bayu2008,Bayu2014,Wu2021CDC}, assuming prior
knowledge of the frequencies of external disturbances. To cope with unknown
frequencies in disturbances, adaptive internal model-based controllers were
proposed in~\cite{Lu2019,He2023output} and a nonlinear internal model-based
controller was developed in~\cite{Wu2022tac} for online estimation of the unknown
frequencies.  

A common feature of the aforementioned literature is that the design of the controller usually assumes the availability of velocity (or velocity error) measurements. In practice, modern position sensors such as encoders and cameras are able to provide low-noise and high-accuracy measurements of incremental joint angles and end-effector displacements, respectively. In contrast, obtaining velocity measurements, either directly or through numerical differentiation, increases system cost and is often prone to significant noise contamination. To mitigate the impact of noise in velocity measurements, there are mainly two classes of approaches. The first one is centered around making
compensations to counteract the effect of velocity noise, assuming that the
noise can be modeled, for instance, as harmonic signals as explored
in~\cite{Byrnes2003internal}. A notable result in this direction is the controller proposed in~\cite{Zarikian2007harmonic}, where two groups of internal models are
introduced for the Euler--Lagrange system to compensate for harmonic
disturbances present in the input and the velocity measurements, respectively.
The second approach aims to circumvent the use of velocity measurements and
instead focuses on developing controllers by employing velocity observers
(e.g.,~\cite{Andrieu2009unifying}) or filters
(e.g.,~\cite{Berghuis1993global,Kelly1993simple}). The present study
specifically concentrates on the latter approach. 

To eliminate the need for joint velocity measurements in robotic manipulators, considerable efforts have been devoted to velocity estimation. Among the various approaches, particular 
attention has been given to the globally convergent observer proposed
in~\cite{Besancon2000Auto}. This observer relies on the construction of a global
change of coordinates to transform the Euler--Lagrange equation into a
(partially) linearized form. However, finding such transformations for general
Euler--Lagrange systems remains challenging, and as of now, these
transformations are only known to exist for limited cases, as demonstrated
in~\cite{Besancon2000Auto} for 1-DOF systems and in~\cite{Yang2017TAC} for
two-link revolute robot manipulators. In general, global output feedback control of Euler--Lagrange systems is challenging, although it is a
subclass of strict-feedback systems,
see~\cite{Mazenc1994global,Andrieu2009unifying} for pioneering studies toward
global output feedback control of strict-feedback nonlinear systems. 
With regard to Euler--Lagrange systems specifically, the
independent papers~\cite{Berghuis1993global} and~\cite{Kelly1993simple} are
pioneering works that first proposed filter-type linear dynamic compensators to
solve this open problem. Since then, this method has been extensively used
both in practice and in the literature, see,
e.g.,~\cite{Ortega1995passivity,Loria1999force,Dirksz2012tracking,He2021leader,Li2023passivity}. 

A primary research interest of this study is to investigate task-space regulation and disturbance rejection for robot manipulators without using velocity measurements. We shall develop a filter-based control approach that integrates the internal model principle and passivity-based output-feedback techniques. Unlike conventional internal model-based methods that suppress nonlinear terms using high-gain functions, the proposed approach directly exploits their structural properties. The resulting control law is sufficiently smooth and does not require prior knowledge of the bounds of external disturbances, thereby simplifying the selection of controller gain parameters. In summary, the main contribution of the present study is the development of an internal model-based velocity-free control law that ensures asymptotic convergence of the regulation error and complete rejection of external disturbances. 

The remainder of this paper is organized as follows.  Section \ref{sec2} introduces the kinematics and dynamics of the manipulator and formulates the problem.  Section \ref{sec3} presents a full-state feedback disturbance rejection controller as a motivational design.  Section \ref{sec4} presents the main result of the present study. Simulation results are given in Section \ref{sec-sim}.  Section \ref{sec-con} closes this technical note.
\textit{Notation:} $\R^{n}$ is $n$-dimensional Euclidean space. $\| \cdot\| $ is the Euclidean norm. $\sigma(A)$ denotes the spectrum of matrix $A$. For a matrix $B\in\R^{n\times m}$, $B^{\t}$ denotes its transpose.

\section{Preliminaries}\label{sec2}
\subsection{Mathematical model of robotic systems}

Let $x\in\R^{n}$ be the task-space (e.g.,~Cartesian space) vector of a rigid
robot manipulator, and it is described as a nonlinear function of the joint
variable $q\in\R^{n}$ as follows~\citep{Cheah2008task}
\begin{equation}\label{sys-kine}
x = f(q).
\end{equation}
Taking its time derivative gives the velocity kinematics
\begin{equation}\label{sys-J}
\dot{x} = J(q)\dot{q},~~ J(q) := \frac{\partial f(q)}{\partial q} 
\end{equation}
where $\dot{q}\in\R^{n}$ is the joint velocity vector, and $J(q)\in\R^{m\times n}$ is the Jacobian matrix. The dynamic equation of a rigid $n$-link manipulator is given by
\begin{align}\label{sys-dyna}
H(q)\ddot{q} + C(q,\dot{q})\dot{q}  + g(q) = u + d
\end{align}
where $u\in\R^{n}$ is the joint control torque vector, $d\in\R^{n}$ is the external disturbance, $H(q)\in\R^{n \times n}$ is the inertia matrix, $C(q,\dot{q})\in\R^{n \times n}$ is the Coriolis and centrifugal force matrix-valued function, and $g(q)\in\R^{n}$ is the gravitational torque.

As in~\cite{Murray1994}, we list two properties of robot dynamics
\eqref{sys-dyna}: \\
\textit{Property~1.} $H(q)$ is uniformly positive definite; and \\
\textit{Property~2.} $\dot{H}(q,\dot{q})-2C(q,\dot{q})$ is skew-symmetric where $\dot{H}(q,\dot{q}) = \sum_{i=1}^{n}\frac{\partial H}{\partial q_{i}}\dot{q}_{i}$.

In the present study, we consider disturbance $d$ generated by the following system
\begin{equation}\label{exo00}
    \dot{w} = s(w),\quad d = \varphi(w,q)
\end{equation}
where $w$ is the state of the exosystem with appropriate dimension. As will be detailed later in Remark~\ref{rem00}, system \eqref{exo00} can describe input disturbance torques as well as external forces at end-effector.

\subsection{Problem formulation}
Combining \eqref{sys-kine}, \eqref{sys-dyna} and \eqref{exo00}, we can write the composite system in state-space representation
\begin{subequations}\label{sys-ss}
\begin{align}
\dot{w} &= s(w) \\
\dot{q} &= \xi \\
H(q)\dot{\xi} &=  - C(q,\xi)\xi - g(q) + u + \varphi(w,q) \\
e &= f(q) - x_{d}
\end{align}
\end{subequations}
where $\xi := \dot{q}$, $x_{d}$ is the constant desired position, and $e$ is the regulated output.

In this paper, two classes of feedback control schemes will be considered, namely full-state feedback control and velocity-free control. The former serves as a foundational design, laying the groundwork for the later main contribution. 

\begin{prob}\label{prob}
Consider the composite system \eqref{sys-ss}. 
\begin{itemize}
    \item [{\bf Q1}] If the joint velocity $\xi$ is available, design a smooth controller of the form
    \begin{equation}\label{pb:law00}
        u = h_{c}(x_{c},e,q,\xi),\quad \dot{x}_{c} = f_{c}(x_{c},e,q,\xi),
    \end{equation}
    or,
    \item [{\bf Q2}] if only the relative end-effector position and the joint position measurements, namely $(e,q)$, are available, design a smooth controller of the form
    \begin{equation}\label{pb:law01}
        u = h_{c}(x_{c},e,q),\quad \dot{x}_{c} = f_{c}(x_{c},e,q),
     \end{equation}
\end{itemize}
where $x_{c}$ is the state of dynamic compensator of appropriate dimension, such that $e(t)\to 0$ and $\xi(t)\to 0$ as $t\to\infty$.
\end{prob}

We investigate the problem under the following assumption. 

\begin{assum}\label{ass00} 
For the exosystem \eqref{exo00}, we assume that: \\
i) the exosystem is linear, i.e., 
\begin{equation}\label{exo01}
    s(w) = Sw
\end{equation}
for some matrix $S\in\R^{p\times p}$ whose eigenvalues are distinct and lie on the imaginary axis, and $S$ is nonsingular; and \\
ii) $\varphi(w,q)$ can be decomposed as
\begin{equation}\label{varphi}
    \varphi(w,q) = D_{1}w + J^{\t}(q)D_{2}w
\end{equation}
where $D_{1}$ and $D_{2}$ are matrices of appropriate dimension. 
\end{assum}

\begin{rem}\label{rem00} 
By \eqref{varphi}, the external disturbances under consideration can be decomposed as
\begin{equation}
\label{defn-d}
d = d_{1} + J^{\t}(q)d_{2}\quad \text{with}\quad d_{1} = D_{1}w,\ d_{2} = D_{2}w
\end{equation}
where $d_{1}\in\R^{n}$ represents the input disturbance torques, typically stemming from actuators, and $d_{2}\in\R^{n}$ represents the external forces acting on the end-effector. 

It should be noted that the system \eqref{exo01} in \ref{ass00} is a linear harmonic oscillator. This implies that each component of $d_{i}$, for $i=1,2$, is a combination of a finite number of sinusoidal signals, i.e.,~for $j=1,\dots,n$, $d_{ij}(t) = \sum_{k=1}^{N_{ij}}F_{ijk}\sin(\sigma_{ijk}t + \Upsilon_{ijk}) + F_{ij0}$ for some $N_{ij}\geq 0$, where $F_{ijk}$ and $\Upsilon_{ijk}$ are unknown parameters determined by the unknown initial condition $w(0)$, and the frequencies $\sigma_{ijk}$ are taken from the set of the eigenvalues of $S$.

In many electro-mechanical control systems, disturbances often appear as harmonic or periodic signals, primarily due to the dynamic behavior of rotational elements such as electric
 motors, gearboxes, and mechanical systems with vibrations (see,
 e.g.,~\cite{Zarikian2007harmonic,Tomizuka2008dealing}). It is well
 known from Fourier analysis that any continuous bounded periodic signal can
 be approximated by a truncated Fourier series. 
Therefore, if the disturbance signals $d_{1}$ and $d_{2}$ are periodic and can be expanded or approximated by a combination of a finite number of sinusoidal signals, they satisfy Assumption \ref{ass00}. 
In the literature, the rejection of such sinusoidal disturbances has been
 widely studied in the control of Euler--Lagrange
 systems~\cite{Lu2019,Wu2022tac} and in consensus problems involving multiple
 Euler--Lagrange systems~\cite{Wang2023leaderless}. 
\end{rem}

\subsection{Passivity of transpose Jacobian feedback control}
An important property of the robot manipulator is that it exhibits passivity properties in both joint and task spaces.\footnote{For the definition of passivity, we refer to \cite[Chapter~6]{Khalil2002}.}

\begin{prop}\label{prop0}
Consider system \eqref{sys-kine}, \eqref{sys-J} and \eqref{sys-dyna} with Properties 1 and 2. Suppose that  $d=0$. Then the following properties hold:
    \begin{itemize}[topsep=0pt,parsep=0pt,partopsep=0pt]
        
       \item [{\bf P1}]  System \eqref{sys-kine}, \eqref{sys-J} and \eqref{sys-dyna} with control input
        \begin{align}\label{prop:law00}
            u = - kJ^{\t}(q)x + v + g(q),\quad k>0
        \end{align}
        is passive with input $v$ and output $\dot{q}$.
        
        \item [{\bf P2}] System \eqref{sys-kine}, \eqref{sys-J} and \eqref{sys-dyna} with control input
        \begin{align}\label{prop:law01}
            u = - kJ^{\t}(q)x + J^{\t}(q)F + g(q),\quad k>0
        \end{align}
        is passive with input $F$ and output $\dot{x}$.

    \end{itemize}
\end{prop}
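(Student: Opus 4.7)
The plan is to exhibit a single storage function that works for both parts and then show, via Property~2, that its derivative along closed-loop trajectories reduces to the advertised passivity rate.

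For \textbf{P1}, I would first substitute \eqref{prop:law00} into \eqref{sys-dyna} with $d=0$; the $g(q)$ term cancels by construction, leaving the closed-loop equation $H(q)\ddot{q} + C(q,\dot{q})\dot{q} = -kJ^{\t}(q)x + v$. I would then propose the storage function
\begin{equation*}
V(q,\dot q) = \tfrac{1}{2}\dot{q}^{\t}H(q)\dot{q} + \tfrac{k}{2}x^{\t}x,
\end{equation*}
which is nonnegative by Property~1. Differentiating along closed-loop trajectories and using $\dot{x}=J(q)\dot{q}$ from \eqref{sys-J} gives
\begin{equation*}
\dot V = \dot{q}^{\t}H(q)\ddot{q} + \tfrac{1}{2}\dot{q}^{\t}\dot H(q,\dot q)\dot{q} + k x^{\t}J(q)\dot{q}.
\end{equation*}
Substituting the closed-loop equation, the feedback term $-kJ^{\t}(q)x$ produces $-k\dot{q}^{\t}J^{\t}(q)x = -kx^{\t}J(q)\dot{q}$, which exactly cancels $kx^{\t}J(q)\dot{q}$. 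The Coriolis contribution merges with the $\tfrac{1}{2}\dot{q}^{\t}\dot H \dot{q}$ term into $\tfrac{1}{2}\dot{q}^{\t}(\dot H - 2C)\dot{q}$, which vanishes by the skew-symmetry in Property~2. What remains is $\dot V = \dot{q}^{\t}v$, i.e.\ the passivity inequality (with equality) from input $v$ to output $\dot{q}$.

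For \textbf{P2} the argument is essentially identical. Substituting \eqref{prop:law01} yields the closed-loop equation $H(q)\ddot{q} + C(q,\dot{q})\dot{q} = -kJ^{\t}(q)x + J^{\t}(q)F$, and the same $V$ together with the same cancellations now gives $\dot V = \dot{q}^{\t}J^{\t}(q)F = \dot{x}^{\t}F$, establishing passivity from input $F$ to output $\dot{x}$.

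No serious obstacle is anticipated; the computation is essentially standard Euler--Lagrange energy shaping. The two points requiring care are (i) the vanishing of the Coriolis contribution through the skew-symmetry of $\dot H - 2C$, and (ii) tuning the coefficient of the quadratic term $\tfrac{k}{2}x^{\t}x$ in $V$ so that its time derivative, after invoking $\dot x = J(q)\dot q$, precisely cancels the $-kJ^{\t}(q)x$ feedback term and leaves only the external-input channel.
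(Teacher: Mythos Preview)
Your proposal is correct and follows essentially the same approach as the paper: the paper uses the identical storage function $V_{1}=\tfrac{1}{2}kx^{\t}x+\tfrac{1}{2}\dot q^{\t}H(q)\dot q$ and obtains $\dot V_{1}=\dot q^{\t}v$ for \textbf{P1} and $\dot V_{1}=\dot q^{\t}J^{\t}(q)F=\dot x^{\t}F$ for \textbf{P2}. Your write-up is in fact more detailed than the paper's, as you explicitly display the cancellation of the $-kJ^{\t}(q)x$ feedback term against the derivative of $\tfrac{k}{2}x^{\t}x$ and the vanishing of the Coriolis contribution via Property~2.
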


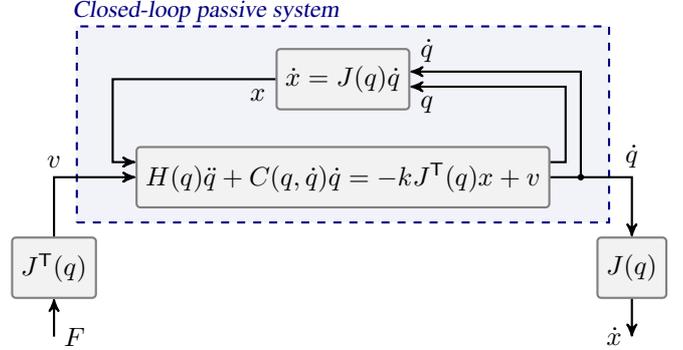
\begin{figure}[htbp]
    \centering
\begin{tikzpicture}[auto,node distance=1.0cm, >=stealth',scale=1]

\draw [thick, dashed, draw=NavyBlue, fill=NavyBlue!5, rounded corners=0mm] (-3.5cm,-0.6cm) rectangle (3.5cm,2.0cm); 
\draw []  (-1.8cm,2.2cm) node{\color{NavyBlue}\small\sl Closed-loop passive system};

\node[block] (plant) {$H(q)\ddot{q}+C(q,\dot{q})\dot{q}=-kJ^{\t}(q)x + v$}; 
\node[block, above of=plant, node distance=1.3cm] (x) {$\dot{x}=J(q)\dot{q}$};
\node[block, below of=plant, node distance=1.2cm, xshift=3.8cm] (J) {$J(q)$};
\node[block, below of=plant, node distance=1.2cm, xshift=-3.8cm] (JT) {$J^{\t}(q)$};


\draw [->, thick] (plant) -| node[above]{$\dot{q}$} (J);

\draw [->, thick] ($(plant.east)+(0cm,0.2cm)$) -- ($(plant.east)+(0.2cm,0.2cm)$) |- ($(x.east)+(0cm,-0.1cm)$) node[below right]{$q$};
\draw [->, thick] ($(plant.east)+(0.4cm,0cm)$) |- ($(x.east)+(0cm,0.1cm)$) node[above right]{$\dot{q}$};
\filldraw[] ($(plant.east)+(0.4cm,0cm)$) circle (1pt);
\draw [->, thick] ($(x.west)+(0cm,0cm)$) node[below left]{$x$} -|  ($(plant.west)+(-0.3cm,0.2cm)$) -- ($(plant.west)+(0cm,0.2cm)$);

\draw [->, thick] (JT) |- node[above]{$v$} (plant);
\draw [->, thick] (J) -- ($(J.south)+(0cm,-0.5cm)$) node[left]{$\dot{x}$};
\draw [->, thick] ($(JT.south)+(0cm,-0.5cm)$) node[right]{$F$} -- (JT);

\end{tikzpicture} 
\caption{Passivity interpretation of the controllers in Proposition~\ref{prop0}. }
\label{fig1-001}
\end{figure}

\begin{proof}
Define a storage function $V_{1} = \frac{1}{2}x^{\t}kx + \frac{1}{2}\dot{q}^{\t}H(q)\dot{q}$. Differentiation along the trajectories of \eqref{sys-J}, \eqref{sys-dyna}, \eqref{prop:law00} yields $\dot{V}_{1} = \dot{q}^{\t}v$, which implies that the system is passive with input $v$ and output $\dot{q}$. 

Further differentiating $V_{1}$ along the trajectories of \eqref{sys-J}, \eqref{sys-dyna}, \eqref{prop:law01} yields $\dot{V}_{1} = \dot{q}^{\t}J^{\t}(q)F = [J(q)\dot{q}]^{\t}F = \dot{x}^{\t}F$, which implies that the system is passive with input $F$ and output $\dot{x}$. 
\end{proof}

The proof shows that the system \eqref{sys-J}, \eqref{sys-dyna} and \eqref{prop:law00} (or \eqref{prop:law01}) is also lossless \cite{Khalil2002}. Figure~\ref{fig1-001} illustrates the closed-loop passive (lossless) mappings. Note that the effect of the external disturbances is not given in Proposition~\ref{prop0}. In the following, we will use these passivity properties to study interactions with external disturbances as well as internal model-based dynamic compensators.

\section{Full-state feedback control}\label{sec3}
This section introduces a solution to problem {\bf Q1} as the first design step. Subsequently, in the next section, we will focus on addressing problem {\bf Q2}. 
The methodology used in this section follows closely the one in \cite{Bayu2008} and uses the passivity properties described in the previous section. In particular, inspired by the internal model principle, a pair of internal model candidates are employed to make compensation for the two external disturbances in \eqref{defn-d}, respectively.

\begin{itemize}[topsep=0pt,parsep=0pt,partopsep=0pt]
    \item[1)] To counteract the effect of $d_{1}$, we introduce an internal model of the following form 
    \begin{equation}\label{im01}
    \dot{\eta}_{1} = A_{1}\eta_{1} - B_{1}\xi,\quad 
    \hat{d}_{1} = B_{1}^{\t}\eta_{1}
    \end{equation}
    with state $\eta_{1}\in\R^{\ell_{1}}$. 

    \item[2)] To counteract the effect of $d_{2}$ at the end-effector, we introduce an internal model of the following form
    \begin{equation}\label{im02}
    \dot{\eta}_{2} = A_{2}\eta_{2} - B_{2}J(q)\xi,\quad 
    \hat{d}_{2} = B_{2}^{\t}\eta_{2}
    \end{equation}
    with state $\eta_{2}\in\R^{\ell_{2}}$. Since $\dot{x} = J(q)\xi$, internal model \eqref{im02} can be written as $\dot{\eta}_{2} = A_{2}\eta_{2} - B_{2}\dot{x}$, which is driven by $\dot{x}$. 
\end{itemize}

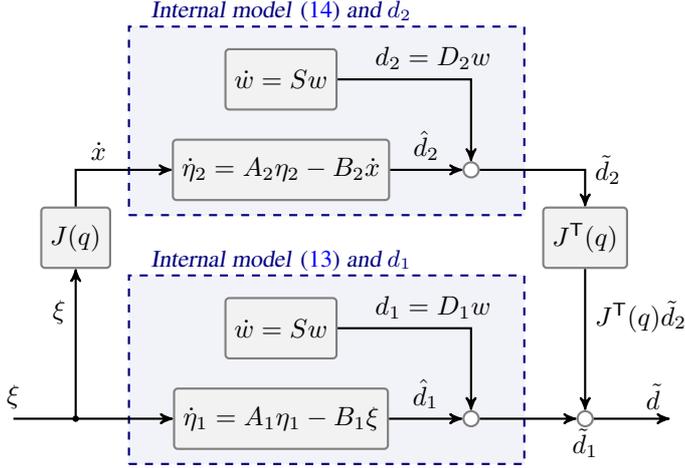
\begin{figure}[htbp]
    \centering
\begin{tikzpicture}[auto,node distance=1.0cm, >=stealth',scale=1]

\draw [thick, dashed, draw=NavyBlue, fill=NavyBlue!5, rounded corners=0mm] (-2.0cm,-1.8cm) rectangle (3.2cm,0.7cm); 
\draw []  (0cm,0.9cm) node{\color{NavyBlue}\small\sl Internal model \eqref{im01} and $d_{1}$}; 

\node[block] (exo1) {$\dot{w} = Sw$}; 
\node[block, below of=exo1, node distance=1.2cm] (im1) {$\dot{\eta}_{1} = A_{1}\eta_{1} - B_{1}\xi$};
\node[sum, right of=im1, node distance=2.5cm] (sum1) {};
\node[sum, right of=sum1, node distance=1.5cm] (sum11) {};

\draw [thick, dashed, draw=NavyBlue, fill=NavyBlue!5, rounded corners=0mm] (-2.0cm,1.5cm) rectangle (3.2cm,4.0cm);
\draw []  (0cm,4.2cm) node{\color{NavyBlue}\small\sl Internal model \eqref{im02} and $d_{2}$}; 
\node[block, above of=im1, node distance=4.5cm] (exo2) {$\dot{w} = Sw$}; 
\node[block, below of=exo2, node distance=1.2cm] (im2) {$\dot{\eta}_{2} = A_{2}\eta_{2} - B_{2}\dot{x}$};
\node[sum, right of=im2, node distance=2.5cm] (sum2) {};
\node[block, right of=sum2, node distance=1.5cm, yshift=-0.9cm] (JT) {$J^{\t}(q)$};
\node[block, below of=im2, node distance=0.9cm, xshift=-2.7cm] (J) {$J(q)$};


\draw [->, thick] (im1) -- node[above]{$\hat{d}_{1}$} (sum1);
\draw [->, thick] (sum1) -- (sum11) node[below]{$\tilde{d}_{1}$};
\draw [->, thick] (sum11) -- ($(sum11.east)+(1.0cm,0cm)$) node[above left]{$\tilde{d}$};
\draw [->, thick] ($(im1.west)+(-2.1cm,0cm)$) node[above]{$\xi$} --  (im1);
\draw [->, thick] (exo1)  -| node[above, xshift=-0.5cm]{$d_{1}=D_{1}w$} (sum1);

\draw [->, thick] (im2) -- node[above]{$\hat{d}_{2}$} (sum2);
\draw [->, thick] (sum2) -| node[right]{$\tilde{d}_{2}$} (JT);
\draw [->, thick] (J) |-  node[above, xshift=0.3cm]{$\dot{x}$} (im2);
\draw [->, thick] (exo2)  -| node[above, xshift=-0.5cm]{$d_{2}=D_{2}w$} (sum2);
\draw [->, thick] ($(im1)+(-2.7cm,-0.0cm)$) -- (J) node[below left, yshift=-0.7cm]{$\xi$} ;
\draw [->, thick] (JT) node[below right, yshift=-0.7cm]{$J^{\t}(q)\tilde{d}_{2}$} --   (sum11) ;
\filldraw[] ($(im1)+(-2.7cm,-0.0cm)$) circle (1pt);

\end{tikzpicture} 
\caption{Modules of exosystem and internal models \eqref{im01} and \eqref{im02} in closed-loop system block diagram.} 
\label{fig1-002}
\end{figure}

In \eqref{im01} and \eqref{im02}, we designed two internal models for counteracting the external disturbances. By interconnecting these internal models and the exosystem appropriately as in Figure~\ref{fig1-002}, the resulting system has a passivity (lossless) property if the design parameters $(A_{i},B_{i})$ for $i=1,2$ are chosen such that the following condition holds.

\begin{assum}\label{assum01}
    For each $i=1,2$, there exists a matrix $\Sigma_{i}\in\R^{\ell_{i}\times p}$ such that
    \begin{equation}\label{Sigma-i}
        \Sigma_{i}S = A_{i}\Sigma_{i},\quad  
        B_{i}^{\t}\Sigma_{i} + D_{i} = 0. 
    \end{equation}
    Moreover, $(A_{i},B_{i}^{\t})$ is observable, and $A_{i}$ is skew-symmetric and nonsingular.
\end{assum}

Define error variables $\tilde{\eta}_{i} = \eta_{i} - \Sigma_{i}w$, $\tilde{d}_{i} = d_{i} + \hat{d}_{i}$ for $i=1,2$, and $\tilde{d} = \tilde{d}_{1} + J^{\t}(q)\tilde{d}_{2}$. Then, under Assumption \ref{assum01}, 
\begin{align}\label{im01-e}
    \dot{\tilde{\eta}}_{1} = A_{1}\tilde{\eta}_{1} - B_{1}\xi,\quad  
    \dot{\tilde{\eta}}_{2} = A_{2}\tilde{\eta}_{2} - B_{2}J(q)\xi
\end{align}
and $\tilde{d} = B_{1}^{\t}\tilde{\eta}_{1} + J^{\t}(q)B_{2}^{\t}\tilde{\eta}_{2}$. 
With the storage function $V_{2} = \frac{1}{2}\tilde{\eta}_{1}^{\t}\tilde{\eta}_{1} + \frac{1}{2}\tilde{\eta}_{2}^{\t}\tilde{\eta}_{2}$, it can be verified that the error system \eqref{im01-e} is lossless with input $\xi$ and output $\tilde{d}$.

Based on the aforementioned passivity analysis, we propose a full-state feedback controller for {\bf Q1} in the following proposition.

\begin{prop}\label{thm00}
Consider the system \eqref{sys-ss} under Assumptions \ref{ass00} and \ref{assum01}, and feedback-interconnected with the controller
\begin{subequations}\label{law-00}
\begin{align}
        \dot{\eta}_{1} &= A_{1}\eta_{1} - B_{1}\xi \\
        \dot{\eta}_{2} &= A_{2}\eta_{2} - B_{2}J(q)\xi \\
        u &= - k_{p}J^{\t}(q)e - k_{d}\xi + g(q)  + B_{1}^{\t}\eta_{1} + J^{\t}(q)B_{2}^{\t}\eta_{2}
\end{align}
\end{subequations}
where $k_{p}, k_{d}>0$. Then, for a finite task space in which the Jacobian matrix $J(q)$ has full rank, the regulation error and velocity asymptotically converges to zero as time $t\to\infty$, i.e.,~$\lim_{t\to\infty} e(t) = 0$ and $\lim_{t\to\infty} \xi(t) = 0$. 

\end{prop}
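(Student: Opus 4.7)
The plan is to assemble a Lyapunov function by concatenating the passivity-based storage of Proposition~\ref{prop0} with the internal-model error storage $V_{2}$ introduced immediately before~\eqref{im01-e}, and then to invoke LaSalle's invariance principle. The key structural observation is that inserting~\eqref{law-00} into the third equation of~\eqref{sys-ss}, using the disturbance decomposition~\eqref{defn-d} together with $\tilde{d}_{i} = d_{i} + B_{i}^{\t}\eta_{i}$, collapses the closed loop into
\begin{equation*}
H(q)\dot{\xi} = -C(q,\xi)\xi - k_{p}J^{\t}(q)e - k_{d}\xi + \tilde{d},
\end{equation*}
so the residual $\tilde{d} = B_{1}^{\t}\tilde{\eta}_{1} + J^{\t}(q)B_{2}^{\t}\tilde{\eta}_{2}$ enters through exactly the passive port of~\eqref{prop:law00}.

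Next, I would differentiate
\begin{equation*}
V = \tfrac{1}{2}\xi^{\t}H(q)\xi + \tfrac{k_{p}}{2}e^{\t}e + \tfrac{1}{2}\tilde{\eta}_{1}^{\t}\tilde{\eta}_{1} + \tfrac{1}{2}\tilde{\eta}_{2}^{\t}\tilde{\eta}_{2}
\end{equation*}
along the closed-loop trajectories. Property~2 cancels $\xi^{\t}C\xi$ against $\tfrac{1}{2}\xi^{\t}\dot{H}\xi$; the identity $\dot{e} = J(q)\xi$ cancels $-k_{p}\xi^{\t}J^{\t}(q)e$ against $k_{p}e^{\t}\dot{e}$; the skew-symmetry of $A_{i}$ asserted in Assumption~\ref{assum01} annihilates the quadratic forms $\tilde{\eta}_{i}^{\t}A_{i}\tilde{\eta}_{i}$; and the disturbance cross term $\xi^{\t}\tilde{d} = \xi^{\t}B_{1}^{\t}\tilde{\eta}_{1} + \xi^{\t}J^{\t}(q)B_{2}^{\t}\tilde{\eta}_{2}$ cancels exactly against $-\tilde{\eta}_{1}^{\t}B_{1}\xi - \tilde{\eta}_{2}^{\t}B_{2}J(q)\xi$ produced by differentiating the internal-model storage---precisely the passivity interconnection visualized in Figure~\ref{fig1-002}. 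What survives is $\dot{V} = -k_{d}\|\xi\|^{2}\le 0$, so $\xi$, $e$, $\tilde{\eta}_{1}$, $\tilde{\eta}_{2}$ are bounded, and within the task-space region where $J(q)$ is of full rank the boundedness of $e$ carries over to $q$, placing every trajectory in a compact set as LaSalle requires.

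The hard part will be extracting convergence of $e$ from the invariant set $\{\xi\equiv 0\}$. On this set $\dot q \equiv 0$, so $q$, $e$, and $J(q)$ are constant, and the dynamic equation collapses to the algebraic identity $\tilde{d}(t) = k_{p}J^{\t}(q)e$. Meanwhile $\dot{\tilde{\eta}}_{i} = A_{i}\tilde{\eta}_{i}$ with each $A_{i}$ skew-symmetric and \emph{nonsingular}, so every component of $\tilde{\eta}_{i}(t)$ is a finite sum of sinusoids at nonzero frequencies and therefore has zero time average; consequently so does $\tilde{d}(t)$. A signal that is simultaneously time-constant and of zero mean must vanish, which yields $k_{p}J^{\t}(q)e = 0$, and the full rank of $J(q)$ then forces $e = 0$. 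LaSalle's invariance principle accordingly delivers $\xi(t) \to 0$ and $e(t) \to 0$ as $t\to\infty$, completing the argument.
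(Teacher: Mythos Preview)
Your proof is correct and follows the same overall strategy as the paper: the same Lyapunov function $V$, the same computation yielding $\dot V=-k_d\|\xi\|^2$, and the same appeal to LaSalle's invariance principle. The only divergence is in the invariant-set analysis. The paper differentiates the algebraic constraint $0=-k_pJ^\t(q)e+\Gamma^\t(q)B^\t\bar\eta$ a total of $\ell$ times and applies Cayley--Hamilton (using nonsingularity of $A$ to ensure $\alpha_\ell\neq 0$) to conclude $\Gamma^\t(q)B^\t\bar\eta=0$; you instead argue that $\tilde d(t)$ is simultaneously constant and a finite superposition of nonzero-frequency sinusoids, hence identically zero by a time-averaging argument. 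Both arguments hinge on the same structural fact---that the $A_i$ are skew-symmetric and nonsingular---and yours is arguably more transparent, while the paper's version generalizes more directly to settings where one wants to avoid explicit spectral reasoning.
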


\begin{proof}
By Assumption \ref{assum01}, there exist matrices $\Sigma_{1}$ and $\Sigma_{2}$ such that the equations in \eqref{Sigma-i} hold. 
Denote
\begin{align}\label{eq:A}
&D = 
\begin{bmatrix}
 D_{1}  \\  D_{2} 
\end{bmatrix}
,\ 
\Sigma = 
\begin{bmatrix}
 \Sigma_{1}  \\  \Sigma_{2} 
\end{bmatrix}
,\
\Gamma(q) = 
\begin{bmatrix}
 I \\ J(q) 
\end{bmatrix}
 \nonumber\\
&A = 
\begin{bmatrix}
 A_{1} & 0 \\ 0 & A_{2} 
\end{bmatrix}
,\ 
B = 
\begin{bmatrix}
 B_{1} & 0 \\ 0 & B_{2} 
\end{bmatrix}
,\
\eta = 
\begin{bmatrix}
 \eta_{1} \\ \eta_{2} 
\end{bmatrix}
.
\end{align}
Using \eqref{eq:A}, the closed-loop system \eqref{sys-ss} and \eqref{law-00} under the coordinate transformation $\bar{\eta} = \eta - \Sigma w$ can be written as
\begin{subequations}
\label{cls01}
\begin{align}
        \dot{\bar{\eta}} &= A\bar{\eta} - B\Gamma(q)\xi \\
        \dot{q} &= \xi \\
        H(q)\dot{\xi} &= -k_{p}J^{\t}(q)e - k_{d}\xi -C(q,\xi)\xi + \Gamma^{\t}(q)B^{\t}\bar{\eta} 
\end{align}
\end{subequations}
in which $\Gamma^{\t}(q)B^{\t}\Sigma w + \Gamma^{\t}(q)D w$ has been canceled by \eqref{Sigma-i}. 
    
Define a Lyapunov function candidate $V\coloneq  V(\bar{\eta},q,\xi)$ by $V = \frac{1}{2}(f(q) - x_{d})^{\t}k_{p}(f(q) - x_{d}) + \frac{1}{2}\xi^{\t}H(q)\xi + \frac{1}{2}\bar{\eta}^{\t}\bar{\eta}$ whose time derivative, along the trajectories of \eqref{cls01}, satisfies
\begin{align*}
        \dot{V} &= e^{\t}k_{p}J(q)\xi + \xi^{\t}\bigl[ -k_{p}J^{\t}(q)e - k_{d}\xi + \Gamma^{\t}(q)B^{\t}\bar{\eta} \bigr] \\ 
        &\quad + \bar{\eta}^{\t}(A\bar{\eta} - B\Gamma(q)\xi)  = -\xi^{\t}k_{d}\xi.
\end{align*}
Since $\dot{V} \leq 0$ and $V\geq 0$, $V$ is bounded for all $t\geq 0$. Hence, $(\bar{\eta}(t),e(t),\xi(t))$ are all bounded over the time interval $[0,\infty)$.

In the following, we will apply LaSalle's invariance theorem~\cite[Theorem~4.4]{Khalil2002} to establish the asymptotic stability of the invariant set.
To this end, we need to find the largest invariant set in $\{(\bar{\eta},q,\xi): \dot{V} = 0 \} = \{(\bar{\eta},q,\xi): \xi = 0 \}$. Substituting $\xi = 0$ into \eqref{cls01} gives
    \begin{subequations}\label{cls02}
    \begin{align}
        \dot{\bar{\eta}} &= A\bar{\eta}  \label{cls02-a}\\
        \dot{q} &= 0  \label{cls02-b}\\ 
        0 &= -k_{p}J^{\t}(q)e  + \Gamma^{\t}(q)B^{\t}\bar{\eta}.  \label{cls02-c}
    \end{align}
    \end{subequations}

    Differentiating \eqref{cls02-c} with respect to $t$, $\ell$ times, where $\ell=\ell_{1}+\ell_{2}$, and using $\dot{J}(q) = \sum_{i=1}^{n}\frac{\partial J}{\partial q_{i}}\xi_{i} = 0$ and $\dot{e}=J(q)\xi = 0$ when $\xi = 0$, we obtain
    \begin{equation*}
        \left\{\begin{array}{l}        
        0 = \Gamma^{\t}(q)B^{\t}A\bar{\eta} \\
        \quad \vdots \\
        0 = \Gamma^{\t}(q)B^{\t}A^{\ell-1}\bar{\eta}  \\
        0 = \Gamma^{\t}(q)B^{\t}A^{\ell}\bar{\eta}\end{array}\right.
        \ \Rightarrow\
        \left\{\begin{array}{l}        
        0 = \Gamma^{\t}(q)B^{\t}(\alpha_{l-1}A)\bar{\eta} \\
        \quad \vdots \\
        0 = \Gamma^{\t}(q)B^{\t}(\alpha_{1} A^{\ell-1})\bar{\eta}  \\
        0 = \Gamma^{\t}(q)B^{\t} A^{\ell}\bar{\eta}\end{array}\right.
\end{equation*}
    in which $\alpha_{1},\dots,\alpha_{l}$ are real numbers such that (by the Cayley--Hamilton theorem) $A^{l} + \alpha_{1}A^{l-1} +\cdots+ \alpha_{l-1}A + \alpha_{l}I = 0$ and $\alpha_{l} \neq 0$. 
    Hence, $
        \alpha_{l}\Gamma^{\t}(q)B^{\t}\bar{\eta} =
        \Gamma^{\t}(q)B^{\t}\bigl(\alpha_{l} I\bigr)\bar{\eta} 
        = \Gamma^{\t}(q)B^{\t}(-\alpha_{l-1}A -\cdots- \alpha_{1} A^{\ell-1} - A^{\ell} )\bar{\eta} 
        = 0$. It follows that $\Gamma^{\t}(q)B^{\t}\bar{\eta} = 0$ holds in the invariant set. Substituting $\Gamma^{\t}(q)B^{\t}\bar{\eta} = 0$ into \eqref{cls02} results in $0 = -k_{p}J^{\t}(q)e$. 
    This means that $e=0$ as long as $J(q)$ is full rank. Hence the largest invariant set in $\{(\bar{\eta},q,\xi): \dot{V} = 0 \}$ w.r.t. \eqref{cls01} is $\Omega \coloneq  \{(\bar{\eta},q,\xi) : \Gamma^{\t}(q)B^{\t}\bar{\eta}=0, e=0, \xi=0\}$.
    
    Finally, by LaSalle's invariance principle, we can conclude that the state trajectories $(e,\xi)$ asymptotically converge to $(0,0)$ as $t\to\infty$.    
\end{proof}

Notice that the proposed controller \eqref{law-00} is based on the internal model principle and specifically incorporates a pair of parallel internal models. Assumption \ref{assum01} asks for the existence of matrices $\Sigma_{1}$ and $\Sigma_{2}$ for the condition~\eqref{Sigma-i} to hold. This condition is assumed separately for $i=1,2$ and is generally not necessary for achieving asymptotic regulation with disturbance rejection. In particular, if $A_{1}$ and $A_{2}$ have the same spectrum, a combined form of this condition for $i=1,2$ may also be effective. However, if $A_{1}$ and $A_{2}$ have no common eigenvalues and the two internal models are designed to counteract the effects of $d_{1}$ and $d_{2}$, respectively, the condition~\eqref{Sigma-i} becomes necessary. 
To demonstrate this necessity property, let us focus on the invariant set $\Omega'=\{(\eta,q,\xi):e=0,\xi=0\}$, where $e=f(q)-x_{d}$, within the closed-loop system consisting of \eqref{sys-ss} and \ref{law-00}.
On the invariant set $\Omega'$, 
\begin{subequations}\label{cls03}
\begin{align}
    \dot{\eta} &= A\eta   \label{cls03-a}\\ 
    \dot{q} &= 0   \label{cls03-b}\\
    0 &= \Gamma^{\t}(q)B^{\t}\eta + \Gamma^{\t}(q)Dw.  \label{cls03-c}
\end{align}
\end{subequations}
 Differentiating \eqref{cls03-c} with respect to $t$, $\ell-1$ times, gives
\begin{equation}\label{eta-w}
    \Phi_{1}\eta + \Phi_{2}w = 0
\end{equation}
where $\Phi_{1} = \bbm{ \Gamma^{\t}(q)B^{\t}  \\ \vdots \\ \Gamma^{\t}(q)B^{\t}A^{\ell - 1} } $ and $\Phi_{2} = \bbm{ \Gamma^{\t}(q)D \\ \vdots \\ \Gamma^{\t}(q)DS^{\ell - 1} }$.

To proceed, we use an intermediate technical lemma whose proof is put in Appendix. 

\begin{lem}\label{lem-app}
Consider $n_{i}$-dimensional $q_{i}$-output observable pairs $(\bfA_{i},\bfC_{i})$, $i=1,2$. Let $\bfT$ be a $q_{1}\times q_{2}$ matrix of rank $q_{2}$.   Denote
\begin{align*}
    \bfA = \begin{bmatrix} \bfA_{1} & 0 \\ 0 & \bfA_{2} \end{bmatrix},\
    \bfC = \begin{bmatrix} \bfC_{1} & \bfT \bfC_{2} \end{bmatrix}.
\end{align*}
If $\sigma(\bfA_{1})\cap\sigma(\bfA_{2})=\emptyset$ then $(\bfA,\bfC)$ is observable.
\end{lem}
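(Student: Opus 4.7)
The plan is to prove the observability claim using the Popov--Belevitch--Hautus (PBH) rank test, since the block-triangular structure of $\bfA$ together with the spectral separation $\sigma(\bfA_1)\cap\sigma(\bfA_2)=\emptyset$ makes modal analysis clean. Concretely, I will show that for every $\lambda\in\mathbb{C}$ the matrix
\[
M(\lambda) \;=\; \begin{bmatrix} \lambda I - \bfA_1 & 0 \\ 0 & \lambda I - \bfA_2 \\ \bfC_1 & \bfT\bfC_2 \end{bmatrix}
\]
has full column rank $n_1+n_2$, which is equivalent to observability of $(\bfA,\bfC)$.

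The argument splits into three cases according to whether $\lambda$ is an eigenvalue of $\bfA_1$, of $\bfA_2$, or of neither. If $\lambda\notin\sigma(\bfA_1)\cup\sigma(\bfA_2)$, both diagonal blocks $\lambda I-\bfA_i$ are invertible and full column rank is immediate. If $\lambda\in\sigma(\bfA_1)$, then by the disjoint-spectrum assumption $\lambda\notin\sigma(\bfA_2)$, so any vector in $\ker M(\lambda)$ must have its second component annihilated by $\lambda I-\bfA_2$, forcing it to vanish; the first component is then in $\ker(\lambda I-\bfA_1)\cap\ker\bfC_1$, which is $\{0\}$ by the PBH test applied to the observable pair $(\bfA_1,\bfC_1)$.

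The case $\lambda\in\sigma(\bfA_2)$ is where the rank hypothesis on $\bfT$ enters and is the only substantive step. A vector in $\ker M(\lambda)$ has its first component killed by the now-invertible $\lambda I-\bfA_1$, leaving a second component $v_2$ that satisfies both $(\lambda I-\bfA_2)v_2=0$ and $\bfT\bfC_2 v_2=0$. Since $\bfT\in\R^{q_1\times q_2}$ has rank $q_2$, it has full column rank and therefore admits a left inverse, so $\bfT\bfC_2 v_2=0$ implies $\bfC_2 v_2=0$. Applying the PBH test to the observable pair $(\bfA_2,\bfC_2)$ gives $v_2=0$, completing the case analysis.

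I would then conclude by invoking the equivalence between the PBH rank condition and observability to obtain the lemma. The main potential obstacle is simply making sure the rank-$q_2$ hypothesis is used precisely where needed (to lift $\bfT\bfC_2 v_2=0$ to $\bfC_2 v_2=0$); without it the coupling through $\bfT$ could collapse directions in the output space and the third case would fail.
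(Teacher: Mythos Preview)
Your proof is correct and follows essentially the same approach as the paper: both use the PBH rank test with a three-way case split according to whether $\lambda$ lies in $\sigma(\bfA_1)$, $\sigma(\bfA_2)$, or neither, and both exploit the full-column-rank hypothesis on $\bfT$ precisely in the $\lambda\in\sigma(\bfA_2)$ case. The only cosmetic difference is that you phrase the argument via $\ker M(\lambda)=\{0\}$ while the paper computes ranks directly, but the underlying logic is identical.
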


Suppose that $\sigma(A_{1})\cap\sigma(A_{2})=\emptyset$. Since $(A_{1},B_{1}^{\t})$ and $(A_{2},B_{2}^{\t})$ are observable, and $J(q)$ has full rank, we can conclude from Lemma~\ref{lem-app} that $(A,\Gamma^{\t}(q)B^{\t})$ is observable. 
Consequently, $\Phi_{1}$ has full rank and $\Phi_{1}^{\t}\Phi_{1}$ is invertible. 
Combining this with \eqref{eta-w}, we obtain
\begin{align}\label{Sigma03}
    \eta(t) = \Sigma w(t) \ \ \text{where}\ \ \Sigma = -\left(\Phi_{1}^{\t}\Phi_{1}\right)^{-1}\Phi_{1}^{\t}\Phi_{2}.
\end{align}
On the set $\Omega'$, we have the following chain of implications
\begin{align}\label{cdtion01}
    &\dot{\eta} \stackrel{\eqref{cls03-a}}{=} A\eta \stackrel{\mathclap{\eqref{Sigma03}}}{=} A\Sigma w \quad \text{and}\quad 
    \Sigma\dot{w} \stackrel{\eqref{exo00},\eqref{exo01}}{=} \Sigma Sw \nonumber\\ 
    &\stackrel{\mathclap{\dot{\eta} = \Sigma\dot{w}}}{\Rightarrow} \ \ 
    A\Sigma w = \Sigma Sw \ 
    \stackrel{\mathclap{\eqref{eq:A}}}{\Rightarrow} 
    A_{i}\Sigma_{i}w = \Sigma_{i}Sw,\ i=1,2.
\end{align}
By substituting $\eta = \Sigma w$ into \eqref{cls03-c}, we obtain $\Gamma^{\t}(q)B^{\t}\Sigma w + \Gamma^{\t}(q)Dw = 0$. This implies that if the spectrum of $A_{1}$ is different to that of $A_{2}$ and the two internal models are designed to counteract the effects of $d_{1}$ and $d_{2}$ respectively, then
\begin{equation}\label{cdtion02}
    B_{i}^{\t}\Sigma_{i} w + D_{i}w = 0,\quad i=1,2. 
\end{equation}
Since \eqref{cdtion01} and \eqref{cdtion02} hold for all $t\geq 0$ and $w(0)$ is such that all modes of the exosystem are excited, this implies that the equations in \eqref{Sigma-i} hold for the closed-loop system \eqref{sys-ss} and \eqref{law-00} on the invariant set $\Omega'$.

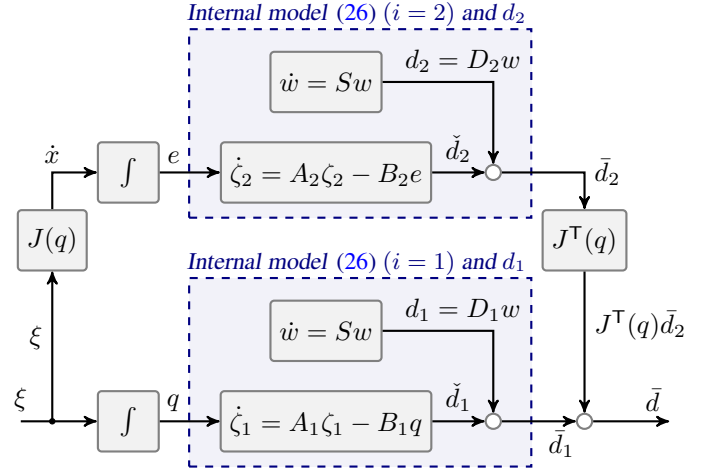
\begin{figure}[htbp]
    \centering
\begin{tikzpicture}[auto,node distance=1.0cm, >=stealth',scale=1]

\draw [thick, dashed, draw=NavyBlue, fill=NavyBlue!5, rounded corners=0mm] (-1.8cm,-1.8cm) rectangle (2.7cm,0.7cm); 
\draw []  (0.4cm,0.9cm) node{\color{NavyBlue}\small\sl Internal model \eqref{im10} $(i=1)$ and $d_{1}$}; 

\node[block] (exo1) {$\dot{w} = Sw$}; 
\node[block, below of=exo1, node distance=1.2cm] (im1) {$\dot{\zeta}_{1} = A_{1}\zeta_{1} - B_{1}q$};
\node[block, left of=im1, node distance=2.6cm] (int1) {$\int$};
\node[sum, right of=im1, node distance=2.2cm] (sum1) {};
\node[sum, right of=sum1, node distance=1.2cm] (sum11) {};

\draw [thick, dashed, draw=NavyBlue, fill=NavyBlue!5, rounded corners=0mm] (-1.8cm,1.5cm) rectangle (2.7cm,4.0cm);
\draw []  (0.4cm,4.2cm) node{\color{NavyBlue}\small\sl Internal model \eqref{im10} $(i=2)$ and $d_{2}$}; 
\node[block, above of=im1, node distance=4.5cm] (exo2) {$\dot{w} = Sw$}; 
\node[block, below of=exo2, node distance=1.2cm] (im2) {$\dot{\zeta}_{2} = A_{2}\zeta_{2} - B_{2}e$};
\node[block, left of=im2, node distance=2.6cm] (int2) {$\int$};
\node[sum, right of=im2, node distance=2.2cm] (sum2) {};
\node[block, right of=sum2, node distance=1.2cm, yshift=-0.9cm] (JT) {$J^{\t}(q)$};
\node[block, below of=int2, node distance=0.9cm, xshift=-1.0cm] (J) {$J(q)$};

\draw [->, thick] (im1) -- node[above]{$\check{d}_{1}$} (sum1);
\draw [->, thick] (sum1) -- (sum11) node[below left]{$\bar{d}_{1}$};
\draw [->, thick] (sum11) -- ($(sum11.east)+(1.0cm,0cm)$) node[above left]{$\bar{d}$};
\draw [->, thick] (int1) --node[above left]{$q$} (im1);
\draw [->, thick] ($(int1.west)+(-1.0cm,0cm)$) node[above]{$\xi$} --  (int1);
\draw [->, thick] (exo1)  -| node[above, xshift=-0.4cm]{$d_{1}=D_{1}w$} (sum1);

\draw [->, thick] (im2) -- node[above]{$\check{d}_{2}$} (sum2);
\draw [->, thick] (sum2) -| node[right]{$\bar{d}_{2}$} (JT);
\draw [->, thick] (J) |-  node[above]{$\dot{x}$} (int2);
\draw [->, thick] (int2) --node[above left]{$e$}  (im2);
\draw [->, thick] (exo2)  -| node[above, xshift=-0.4cm]{$d_{2}=D_{2}w$} (sum2);
\draw [->, thick] ($(int1)+(-1.0cm,-0.0cm)$) -- (J) node[below left, yshift=-1cm]{$\xi$} ;
\draw [->, thick] (JT) node[below right, yshift=-0.8cm]{$J^{\t}(q)\bar{d}_{2}$} --   (sum11) ;
\filldraw[] ($(int1)+(-1.0cm,-0.0cm)$) circle (1pt);

\end{tikzpicture} 
\caption{Modules of exosystem and internal model \eqref{im10} in closed-loop system block diagram. } 
\label{fig1-003}
\end{figure}

\section{Velocity-free control}\label{sec4}

This section is devoted to developing a velocity-free controller for solving {\bf Q2}. To avoid using velocity measurements in the internal model dynamics, we propose to modify internal models \eqref{im01} and \eqref{im02}, respectively, as
\begin{equation}\label{im10}
    \begin{aligned}
        &\dot{\zeta}_{i} = A_{i}\zeta_{i} - B_{i}y_{i},\quad  y_{1} = q,\ y_{2} = e \\
        &\check{d}_{i} = B_{i}^{\t}(A_{i}\zeta_{i} - B_{i}y_{i})
    \end{aligned}
\end{equation}
for $i=1,2$, with $\zeta_{i}\in\R^{\ell_{i}}$ and $A_{i},B_{i}$ as in \eqref{im01} and \eqref{im02}.

Similar to that in full-state feedback control, interconnecting internal models \eqref{im10} and the exosystem as in Figure~\ref{fig1-003} gives rise to a system having lossless property for appropriate design parameters. To this end, let us define error variables $\bar{\zeta}_{i} = A_{i}\zeta_{i} - B_{i}y_{i} - \Sigma_{i}w$, $\bar{d}_{i} = d_{i} + \check{d}_{i}$ for $i=1,2$, and $\bar{d} = \bar{d}_{1} + J^{\t}(q)\bar{d}_{2}$. Then, under Assumption \ref{assum01}, 
\begin{align}\label{im10-e}
    \dot{\bar{\zeta}}_{1} = A_{1}\bar{\zeta}_{1} - B_{1}\xi,\quad 
    \dot{\bar{\zeta}}_{2} = A_{2}\bar{\zeta}_{2} - B_{2}J(q)\xi  
\end{align}
and $\bar{d} = B_{1}^{\t}\bar{\zeta}_{1} + J^{\t}(q)B_{2}^{\t}\bar{\zeta}_{2}$. With the storage function $V_{3} = \frac{1}{2}\bar{\zeta}_{1}^{\t}\bar{\zeta}_{1} + \frac{1}{2}\bar{\zeta}_{2}^{\t}\bar{\zeta}_{2}$, it can be verified that the error system \eqref{im10-e} is lossless with input $\xi$ and output $\bar{d}$.

To eliminate the need for velocity measurements in the stabilization part, we
introduce a filter-type dynamic compensator that uses its output as a
substitute for joint velocity measurements. Although this approach, originating from the seminal works of~\cite{Berghuis1993global,Kelly1993simple}, has been widely applied in the control of Euler--Lagrange systems, it remains essential to
investigate whether the integration of the designed filter with the lossless
internal model-based disturbance compensator will maintain stability and ensure
the asymptotic convergence of the regulation error.
The main result of the present study is given as follows.

\begin{prop}\label{thm01}
Consider the system \eqref{sys-ss} under Assumptions \ref{ass00} and \ref{assum01}, and feedback-interconnected with the controller
\begin{subequations}\label{law-02}
\begin{align}
        \dot{\zeta}_{1} &= A_{1}\zeta_{1} - B_{1}q \\
        \dot{\zeta}_{2} &= A_{2}\zeta_{2} - B_{2}e \\
        \dot{\chi} &= -h(\chi + hq) \\
        u &= -k_{p}J^{\t}(q)e - k_{d}(\chi + hq) + g(q) \nonumber\\
& \quad  + B_{1}^{\t}(A_{1}\zeta_{1} - B_{1}q) +  J^{\t}(q)B_{2}^{\t}(A_{2}\zeta_{2} - B_{2}e)
\end{align}
\end{subequations}
where $k_{p}, k_{d}>0$. Then, for a finite task space in which the Jacobian matrix $J(q)$ has full rank, the regulation error and velocity asymptotically converges to zero as time $t\to\infty$, i.e.,~$\lim_{t\to\infty} e(t) = 0$, $\lim_{t\to\infty} \xi(t) = 0$.   

\end{prop}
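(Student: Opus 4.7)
The plan is to follow the Lyapunov--LaSalle template from the proof of Proposition~\ref{thm00}, with two modifications that account for the novelties of the controller~\eqref{law-02}: the internal models~\eqref{im10} are now driven by $q$ and $e$ instead of $\xi$ and $\dot x$, and the damping term is $-k_d(\chi+hq)$ rather than $-k_d\xi$. Introducing the error coordinates $\bar\zeta_i := A_i\zeta_i - B_iy_i - \Sigma_iw$ (for $i=1,2$) together with the filter variable $\vartheta := \chi + hq$, and invoking Assumption~\ref{assum01}, a direct computation reproduces the lossless error dynamics~\eqref{im10-e}, the compensation identity $d_i + \check d_i = B_i^{\t}\bar\zeta_i$ (so that $\varphi(w,q)$ is cancelled by the feedforward in $u$), and the filter equation $\dot\vartheta = -h\vartheta + h\xi$. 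The closed-loop torque equation in the new coordinates then reads
\begin{equation*}
H(q)\dot\xi = -C(q,\xi)\xi - k_pJ^{\t}(q)e - k_d\vartheta + B_1^{\t}\bar\zeta_1 + J^{\t}(q)B_2^{\t}\bar\zeta_2.
\end{equation*}

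For the Lyapunov step, I propose
\begin{equation*}
V = \tfrac{1}{2}e^{\t}k_pe + \tfrac{1}{2}\xi^{\t}H(q)\xi + \tfrac{1}{2}\bar\zeta_1^{\t}\bar\zeta_1 + \tfrac{1}{2}\bar\zeta_2^{\t}\bar\zeta_2 + \tfrac{k_d}{2h}\vartheta^{\t}\vartheta.
\end{equation*}
The first four terms coincide with the storage function used in Proposition~\ref{thm00}, and their time derivative, after invoking Property~2 of~\eqref{sys-dyna} and the skew-symmetry of the $A_i$, collapses to the single bilinear residue $-\xi^{\t}k_d\vartheta$. The coefficient $k_d/h$ of the filter term is then chosen precisely so that its derivative $\tfrac{k_d}{h}\vartheta^{\t}(-h\vartheta + h\xi) = -k_d\|\vartheta\|^2 + k_d\vartheta^{\t}\xi$ absorbs that residue, leaving
\begin{equation*}
\dot V = -k_d\|\vartheta\|^2 \le 0.
\end{equation*}
Consequently $V$ is non-increasing, all signals $(e,\xi,\bar\zeta_1,\bar\zeta_2,\vartheta)$ remain bounded on $[0,\infty)$, and $q$ stays in the prescribed full-rank task space; trajectories are therefore confined to a positively invariant compact sublevel set of $V$ on which LaSalle's invariance principle applies.

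The final step is to characterize the largest invariant set contained in $\{\dot V = 0\} = \{\vartheta = 0\}$. On this set the filter equation forces $\dot\vartheta = h\xi = 0$, hence $\xi \equiv 0$; then $q$ is constant, $\dot{\bar\zeta} = A\bar\zeta$, and substituting $\dot\xi = 0$ into the torque equation yields
\begin{equation*}
0 = -k_pJ^{\t}(q)e + \Gamma^{\t}(q)B^{\t}\bar\zeta,
\end{equation*}
which is structurally identical to the system~\eqref{cls02-c} analysed in the proof of Proposition~\ref{thm00}. The Cayley--Hamilton argument used there applies verbatim to conclude $\Gamma^{\t}(q)B^{\t}\bar\zeta = 0$, whence $J^{\t}(q)e = 0$ and, by the full-rank assumption on $J(q)$, $e = 0$. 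LaSalle then delivers $(e(t),\xi(t))\to(0,0)$.

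The main obstacle I anticipate is choosing the filter weight $k_d/h$ correctly: unlike Proposition~\ref{thm00}, where $\dot V = -\xi^{\t}k_d\xi$ is directly sign-definite, here the indefinite bilinear term $-\xi^{\t}k_d\vartheta$ must be cancelled exactly, and only the specific coefficient $k_d/h$ achieves this. A subtler auxiliary point is the additional implication $\vartheta \equiv 0 \Rightarrow \xi \equiv 0$ needed to reuse the Cayley--Hamilton step on the invariant set; this is legitimate only because the filter is driven by $\xi$ through $\dot\vartheta = -h\vartheta + h\xi$, and would not hold for a generic passive compensator.
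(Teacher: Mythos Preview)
Your proposal is correct and follows essentially the same route as the paper: the same shifted internal-model coordinates $\bar\zeta_i = A_i\zeta_i - B_iy_i - \Sigma_iw$, the same Lyapunov function (the paper writes $\tfrac{1}{2}k_d(\chi+hq)^{\t}h^{-1}(\chi+hq)$ for your filter term with weight $k_d/h$), the same derivative $\dot V = -k_d\|\chi+hq\|^2$, and the same LaSalle/Cayley--Hamilton reduction to~\eqref{cls02}. The only cosmetic difference is that the paper argues $\dot\chi\equiv 0\Rightarrow q$ constant $\Rightarrow \xi\equiv 0$ on the invariant set, whereas you go through $\vartheta\equiv 0\Rightarrow \dot\vartheta = h\xi\equiv 0$; these are equivalent since $\dot\chi = -h\vartheta$.
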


\begin{proof}
By Assumption \ref{assum01}, there exist $\Sigma_{1}$ and $\Sigma_{2}$ satisfying \eqref{Sigma-i}. 
Define $\zeta = [\zeta_{1}^{\t}, \zeta_{2}^{\t}]^{\t}$. 
Applying the linear coordinate transformation $\bar{\zeta} = A\zeta - By - \Sigma w$ to the closed-loop system \eqref{sys-ss} and \eqref{law-02} gives
\begin{subequations}
\label{cls10}
\begin{align}
    \dot{\bar{\zeta}} &= A\bar{\zeta} - B\Gamma(q)\xi  \label{cls10-a}\\
    \dot{\chi} &= -h(\chi + hq)  \label{cls10-b}\\ 
    \dot{q} &= \xi  \label{cls10-c}\\
    H(q)\dot{\xi} &= -k_{p}J^{\t}(q)e - k_{d}(\chi + hq) \nonumber\\
    &\quad -C(q,\xi)\xi + \Gamma^{\t}(q)B^{\t}\bar{\zeta}.  \label{cls10-d}
\end{align}
\end{subequations}
Define $\bar{V}\coloneq \bar{V}(\bar{\zeta},\chi,q,\xi) = \frac{1}{2}\bar{\zeta}^{\t}\bar{\zeta} + \frac{1}{2}k_{d}(\chi + hq)^{\t}h^{-1}(\chi + hq) + \frac{1}{2}k_{p}(f(q) - x_{d})^{\t}(f(q) - x_{d})  + \frac{1}{2}\xi^{\t}H(q)\xi$. 
Its time derivative, along the trajectories of \eqref{cls10}, satisfies $\dot{\bar{V}} = -k_{d}(\chi + hq)^{\t}(\chi + hq)$. 
Since $\dot{\bar{V}} \leq 0$ and $\bar{V}\geq 0$, $\bar{V}$ is bounded for all $t\geq 0$. Hence, $(\bar{\zeta}(t),\dot{\chi}(t),e(t),\xi(t))$ are all bounded over the time interval $[0,\infty)$.

As before, LaSalle's invariance theorem can be applied to complete the proof. To find the largest invariant set in
\begin{equation}
\label{inv10}
\bigl\{(\bar{\zeta},\chi,q,\xi):\dot{\bar{V}}=0\bigr\} ~~\text{or}~~ \bigl\{(\bar{\zeta},\chi,q,\xi):\dot{\chi}=0\bigr\}
\end{equation}
we notice that $\dot{\chi} = 0$ implies $\chi$ is a constant vector, and hence by \eqref{cls10-b}, $q$ is also a constant vector. 
It follows that $\xi=0$.

Following similar reasoning as in the proof of Proposition \ref{thm00}, we can conclude that the largest invariant set in \eqref{inv10} w.r.t. \eqref{cls10} is the set $\bar{\Omega} \coloneq  \bigl\{(\bar{\zeta},\chi,q,\xi) : \Gamma^{\t}(q)B^{\t}\bar{\zeta}=0,\dot{\chi}=0,e=0,\xi=0\bigr\}$ 
in which $\dot{\chi} = -h(\chi + hq)$ and $e=f(q)-x_{d}$. Finally, by LaSalle's invariance principle, we conclude that $\lim_{t\to\infty}e(t) = 0$ and $\lim_{t\to\infty}\xi(t) = 0$. 
\end{proof}

\begin{figure*}[!ht]
     \centering
     \begin{subfigure}[t]{0.30\textwidth}
         \centering
         \includegraphics[width=\textwidth, trim={14 3 33 20}, clip]{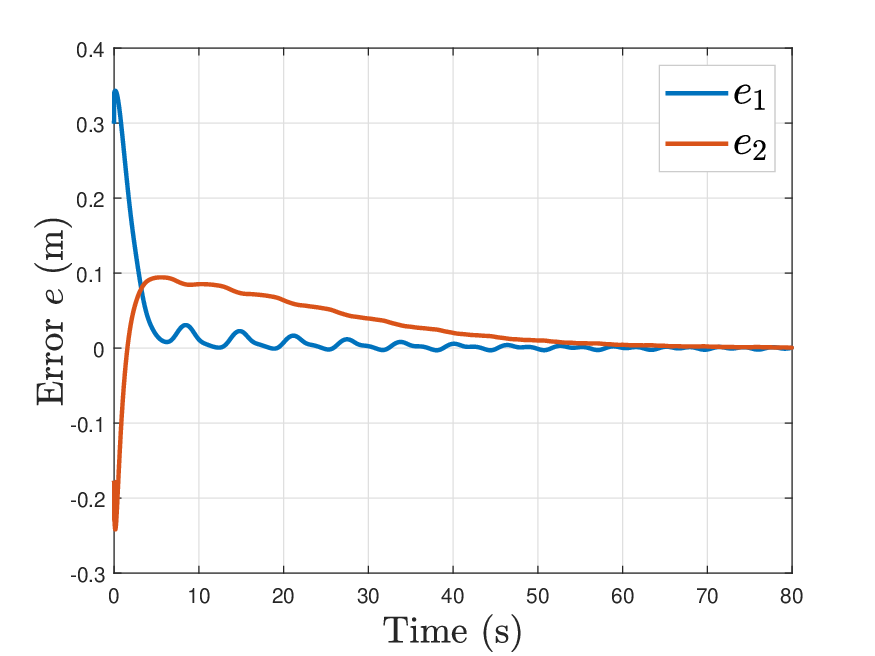}
         \caption{Error $e = [e_{1}, e_{2}]^{\t}$}
         \label{fig:e}
     \end{subfigure}
     \hfill
     \begin{subfigure}[t]{0.30\textwidth}
         \centering
         \includegraphics[width=\textwidth, trim={14 3 33 20}, clip]{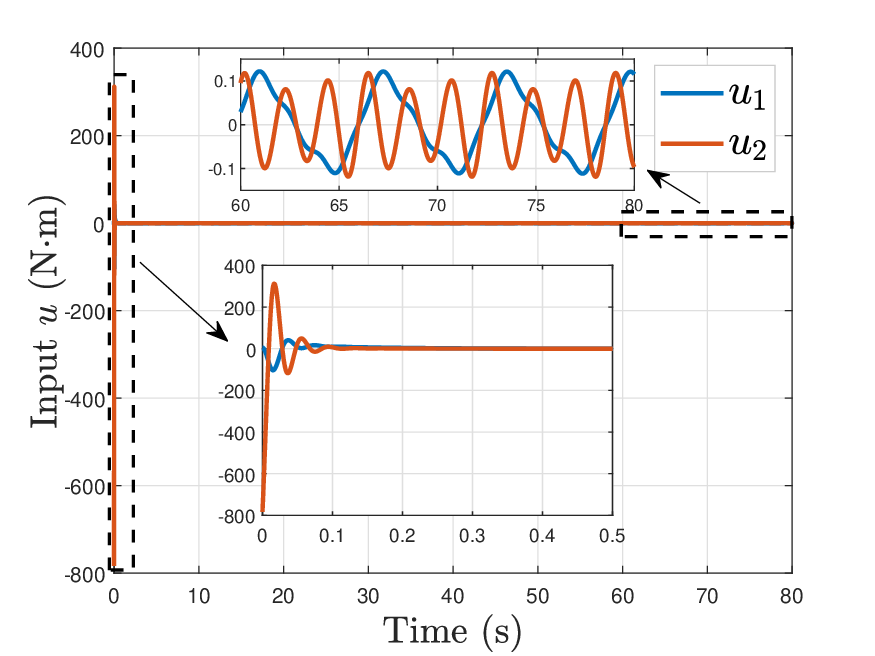}
         \caption{Control input $u = [u_{1}, u_{2}]^{\t}$}
         \label{fig:u}
     \end{subfigure}
     \hfill
     \begin{subfigure}[t]{0.30\textwidth}
         \centering
         \includegraphics[width=\textwidth, trim={14 3 33 20}, clip]{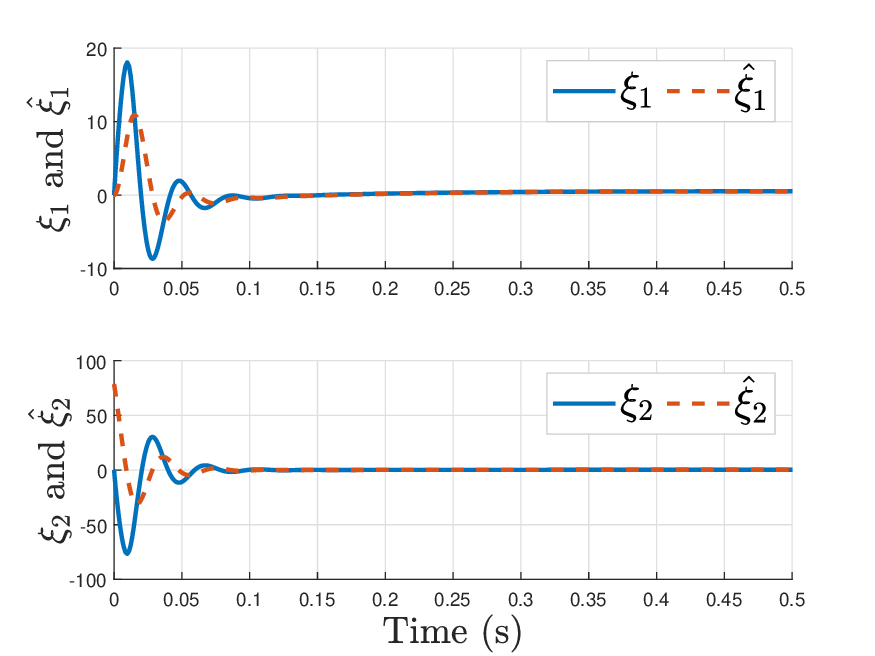}
         \caption{Joint velocity $\xi = [\xi_{1}, \xi_{2}]^{\t}$ and $\hat{\xi} = [\hat{\xi}_{1}, \hat{\xi}_{2}]^{\t}$}
         \label{fig:xi}
     \end{subfigure}
        \caption{Simulation results for the controller without saturation (controller \eqref{law-02}).}
        \label{fig:no-sat}
\end{figure*}

\begin{figure*}[!ht]
     \centering
     \begin{subfigure}[t]{0.30\textwidth}
         \centering
         \includegraphics[width=\textwidth, trim={14 3 33 20}, clip]{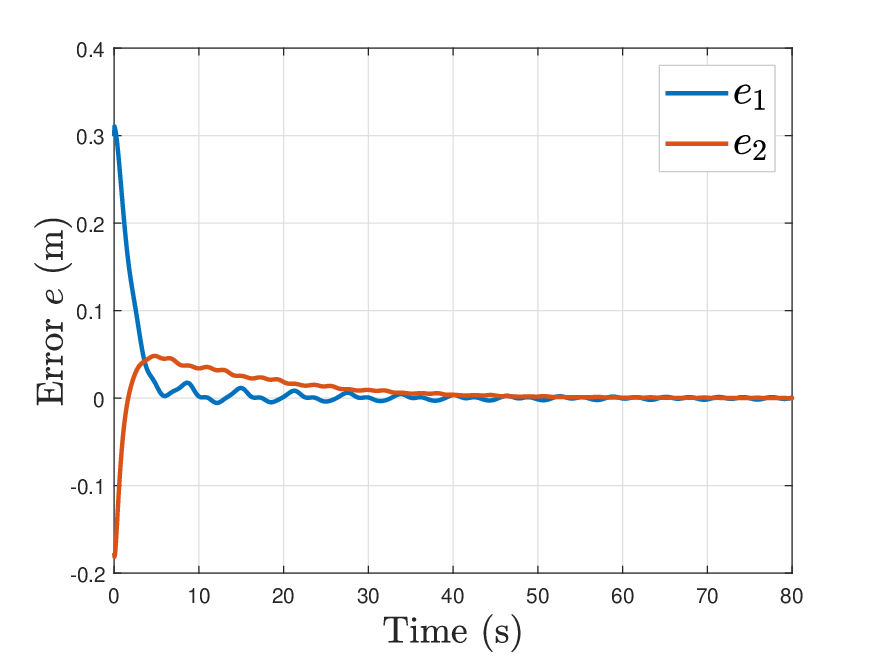}
         \caption{Error $e = [e_{1}, e_{2}]^{\t}$}
         \label{fig:sat-e}
     \end{subfigure}
     \hfill
     \begin{subfigure}[t]{0.30\textwidth}
         \centering
         \includegraphics[width=\textwidth, trim={14 3 33 20}, clip]{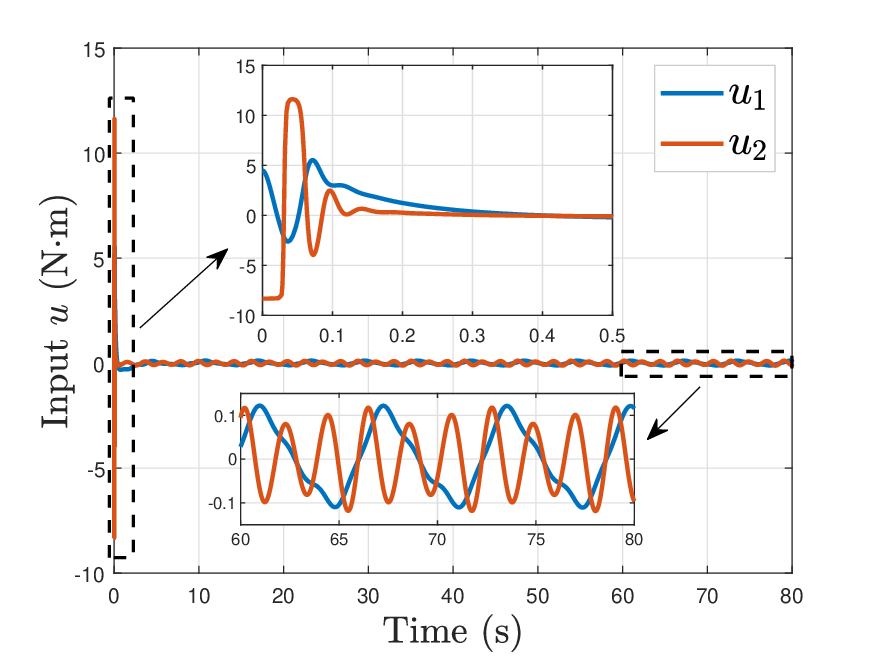}
         \caption{Control input $u = [u_{1}, u_{2}]^{\t}$}
         \label{fig:sat-u}
     \end{subfigure}
     \hfill
     \begin{subfigure}[t]{0.30\textwidth}
         \centering
         \includegraphics[width=\textwidth, trim={14 3 33 20}, clip]{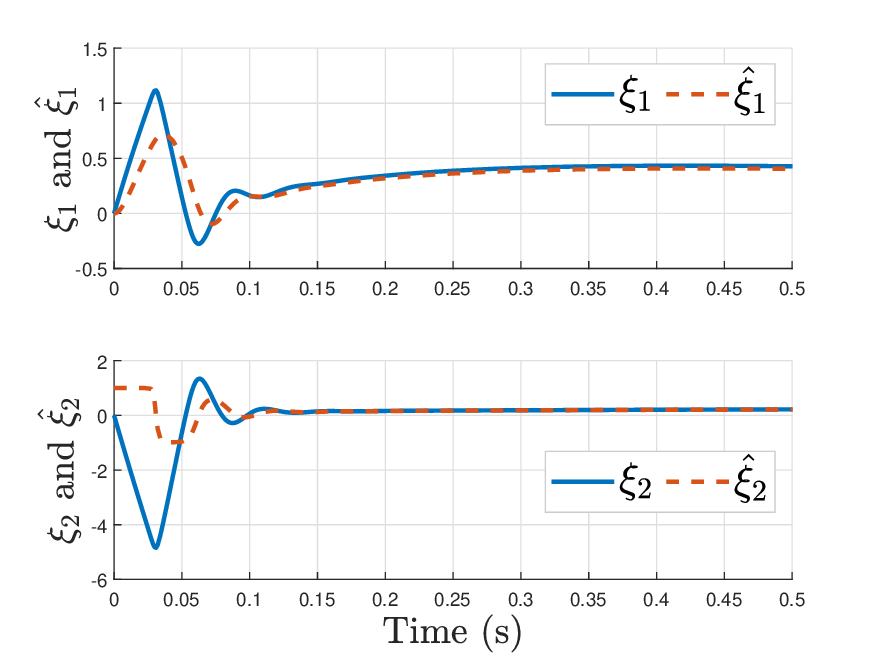}
         \caption{Joint velocity $\xi = [\xi_{1}, \xi_{2}]^{\t}$ and $\hat{\xi} = [\hat{\xi}_{1}, \hat{\xi}_{2}]^{\t}$}
         \label{fig:sat-xi}
     \end{subfigure}
        \caption{Simulation results for the controller with saturation (controller \eqref{law-02-sat}).}
        \label{fig:sat}
\end{figure*}

It should be noted that the proposed internal model-based velocity-free controller does
not rely on high-gain error feedback or high-gain observers, which are commonly
used in internal model-based output regulation designs, see for
example~\cite{Isidori2012robust}. Additionally, its design does not require a
prior knowledge of the boundaries of external disturbances. To close this
section, we present the following remarks concerning the proposed controller
\eqref{law-02}.  
First, in practice, we can construct the matrices $A_{1}$ and $A_{2}$ using the spectrum of $S$, which is based on \emph{a priori} knowledge of vibrations that we can gather from the environment where the robots are operated. For a practical way of constructing  $(A_{i},B_{i})$ for $i=1,2$, we refer to \cite{Bayu2008}. 
Second,the asymptotic convergence of $(e,\xi)$ is still ensured by modifying the stabilization part in \eqref{law-02} with saturation functions as follows
\begin{align}\label{law-02-sat}
        u &= -k_{p}J^{\t}(q)\frac{e}{1+e^{\t}e} - k_{d}\textnormal{Tanh}\big(\chi + hq\big) + g(q) \nonumber\\
        &\quad + B_{1}^{\t}(A_{1}\zeta_{1} - B_{1}q) +  J^{\t}(q)B_{2}^{\t}(A_{2}\zeta_{2} - B_{2}e)
\end{align}
where the vector function $\textnormal{Tanh}(\cdot)$ is defined as $\textnormal{Tanh}(x) = [\tanh(x_{1}),\dots,\tanh(x_{n})]^{\t}$ for all $x=[x_{1},\dots,x_{n}]^{\t}$. 
The proof can be completed by following the steps as in the proof of Proposition \ref{thm01} and using the following storage function: $U = \frac{1}{2}\bar{\zeta}^{\t}\bar{\zeta} + \frac{1}{2}k_{d}h^{-1}\sum_{i=1}^{n}\ln(\cosh(\hat{\xi}_{i})) + \frac{1}{2}k_{p}\ln(1 + e^{\t}e)  + \frac{1}{2}\xi^{\t}H(q)\xi$ where $\hat{\xi}_{i}$ is the $i$th element of $\hat{\xi} = \chi + hq$.

\section{Simulation result}\label{sec-sim}
To demonstrate the effectiveness of the proposed velocity-free controller, a
two-link planar manipulator is used for validation. We refer
to~\cite{Kelly1993simple} for the dynamic model and parameter setting of the
manipulator. The kinematic equation~\eqref{sys-kine} of the manipulator is
given by $x =f(q) = [l_{1}\cos(q_{1})+l_{2}\cos(q_{1}+q_{2}), l_{1}\sin(q_{1})+l_{2}\sin(q_{1}+q_{2})]^{\t}$, where $l_{i}$ for $i=1,2$ is the length of the
$i$th link and $q = [q_{1}, q_{2}]^{\t}$ is the joint angle vector. 
Correspondingly, the manipulator Jacobian matrix is $J(q) = \bbm{ -l_{1}\sin(q_{1})-l_{2}\sin(q_{1}+q_{2})  & -l_{2}\sin(q_{1}+q_{2}) \\
    l_{1}\cos(q_{1})+l_{2}\cos(q_{1}+q_{2})  &  l_{2}\cos(q_{1}+q_{2}) }$.

The external disturbances in \eqref{defn-d} are set as $d_{1} = 0.1 [\sin(\omega_{1}t), \sin(\omega_{3}t)]^{\t}$ and $d_{2} = 0.1[\sin(\omega_{2}t), \sin(\omega_{4}t)]^{\t}$ with known frequencies $\omega_{i}=i$ for $i=1,2,3,4$. The initial joint position and joint velocity are $q(0) = [0, \pi/4]^{\t}$ and $\xi(0) = [0, 0]^{\t}$, respectively. The end-effector is considered fixed at the end of the second link, and the desired end-effector position is chosen as $x_{d} = [0.064, 0.290]^{\t}$ in the robot base frame. 
Based on this setup, two simulations are conducted: 
\begin{itemize}[topsep=0pt,parsep=0pt,partopsep=0pt,itemsep=0pt]
\item [1)] In the first simulation, the velocity-free controller \eqref{law-02} proposed in Theorem~\ref{thm01} is used. Simulation results are shown in \Cref{fig:no-sat}. 
\item [2)] In the second simulation, we use the controller \eqref{law-02-sat} in which saturation functions are introduced to $e$ and $\hat{\xi}$. Simulation results are shown in \Cref{fig:sat}. 
\end{itemize}

The controller parameters for both simulations are selected as follows: $k_{p}=50$, $k_{d}=10$, $A_{i} = \textnormal{diag}\left(\sbm{0 & \omega_{i} \\ -\omega_{i} & 0}, \sbm{0 & \omega_{i+2} \\ -\omega_{i+2} & 0} \right)$, $B_{i} = \textnormal{diag}\left([1,0]^{\t}, [1,0]^{\t}\right)$, $i=1,2$, and $h =100$. 
The initial states of the internal models and the filter are all zero.

\Cref{fig:e} and \Cref{fig:sat-e} demonstrate that, in both cases, the regulation error $e$ converges to zero as expected despite the presence of external disturbances. A comparison between  \Cref{fig:u} and \Cref{fig:sat-u} shows that the control input of the first simulation can peak to large values during an initial transient period, whereas the control input of the second simulation is limited to a more acceptable level due to the use of saturation functions. It should be noted that the steady-state input signals of both simulations are sinusoidal waves capable of counteracting the effect of the external disturbances. \Cref{fig:xi} and \Cref{fig:sat-xi} demonstrate the joint velocity $\xi$ and the output of the filter $\hat{\xi} = \chi + hq$ of the two simulations. \Cref{fig:xi} illustrates the peaking phenomenon of the filter without saturation. \Cref{fig:sat-xi} shows that the peaking effect is reduced by saturating the estimates.

\section{Conclusions}\label{sec-con}
This paper considered task-space regulation of robot manipulators subject to sinusoidal external disturbances with known frequencies. We developed both full-state feedback and velocity-free controllers utilizing tools from internal model-based and passivity-based approaches. The proposed controllers ensure complete disturbance rejection and guarantee asymptotic convergence of the regulation error to zero. 

Relating to the current research, there comes up with an interesting question
of a future study on velocity-free regulation problems with uncertain
exosystems. We shall note that it would never be a trivial task because of the
technical challenge in constructing suitable control Lyapunov functions when
nonlinear or adaptive internal models (see~\cite{Bin2022internal} for a quick
overview) were incorporated. In this regard, one may refer
to~\cite{Loria2016observers} for a recent study of tracking control in joint
space by output feedback. To some extent, one must overcome the above hurdle
due to the indispensable role of internal models for the problem when merely
using output feedback. Furthermore, an extension of the proposed approach will
be explored to investigate the tracking of exogenous signals~\cite{Wu2025task}
for robotic systems without using velocity measurements.

\appendix
\section{Appendix: Proof of Lemma~\ref{lem-app}} 
We prove the result by using the PBH observability test \cite[p.~366]{Kailath1980-book}: The pair $(\bfA,\bfC)$ will be observable if and only if the matrix $\bbm{sI-\bfA \\ \bfC}$ has rank $n_{1}+n_{2}$ for all $s$. The proof falls naturally into three parts.

\begin{itemize}
    
\item [1)] For all $s\notin\sigma(\bfA_{1})\cap\sigma(\bfA_{2})$, we have
\begin{align*}
    \textnormal{rank} \bbm{sI - \bfA \\ \bfC} =
    \textnormal{rank} \bbm{sI - \bfA_{1} & 0 \\ 0 & sI - \bfA_{2} \\ \bfC_{1} & \bfT \bfC_{2}} = n_{1} + n_{2}
\end{align*}
where $\sigma(\bfA_{1})$ and $\sigma(\bfA_{2})$ denote the sets of eigenvalues of $\bfA_{1}$ and $\bfA_{2}$, respectively.

\item [2)] For all $s\in\sigma(\bfA_{1})$, taking into account $\sigma(\bfA_{1})\cap\sigma(\bfA_{2})=\emptyset$, we have $\textnormal{rank}(\bbm{sI - \bfA_{2}}) = n_{2}$. Hence,  
\begin{align*}
    \textnormal{rank} \bbm{sI - \bfA \\ \bfC} &=
    \textnormal{rank} \bbm{sI - \bfA_{1} & 0 \\ 0 & sI - \bfA_{2} \\ \bfC_{1} & \bfT \bfC_{2}} \\
    &= \textnormal{rank} \bbm{sI - \bfA_{1}  \\ \bfC_{1}} + n_{2} = n_{1} + n_{2}.
\end{align*}

\item [3)] Similarly, for all $s\in\sigma(\bfA_{2})$, 
\begin{align*}
    \textnormal{rank} \bbm{sI - \bfA \\ \bfC} 
    &= n_{1} + \textnormal{rank} \bbm{sI - \bfA_{2}  \\ \bfT \bfC_{2}} \\
    &= n_{1} + \textnormal{rank} (\bbm{I & 0 \\ 0 & \bfT}\bbm{sI - \bfA_{2} \\ \bfC_{2}}) .
\end{align*}
Since $\bfT$ has full column rank, we have 
\begin{align*}
    \textnormal{rank} (\bbm{I & 0 \\ 0 & \bfT}\bbm{sI - \bfA_{2} \\ \bfC_{2}}) 
    = \textnormal{rank} \bbm{sI - \bfA_{2} \\ \bfC_{2}}
\end{align*}
Then,
\begin{align*}
    \textnormal{rank} \bbm{sI - \bfA \\ \bfC} 
    &= n_{1} + \textnormal{rank} \bbm{sI - \bfA_{2} \\ \bfC_{2}} = n_{1} + n_{2}.
\end{align*}

\end{itemize}

Finally, we can conclude that the matrix $\bbm{sI - \bfA \\ \bfC}$ has rank $n_{1}+n_{2}$ for all $s$, which implies that $(\bfA,\bfC)$ is observable. The proof is complete.


\begin{thebibliography}{36}


\bibitem[Andrieu and Praly(2009)]{Andrieu2009unifying}
V.~Andrieu and L.~Praly.
\newblock A unifying point of view on output feedback designs for global asymptotic stabilization.
\newblock \emph{Automatica}, 45\penalty0 (8):\penalty0 1789--1798, 2009.

\bibitem[Berghuis and Nijmeijer(1993)]{Berghuis1993global}
H.~Berghuis and H.~Nijmeijer.
\newblock Global regulation of robots using only position measurements.
\newblock \emph{Systems \& Control Letters}, 21\penalty0 (4):\penalty0 289--293, 1993.

\bibitem[Besan{\c{c}}on(2000)]{Besancon2000Auto}
G.~Besan{\c{c}}on.
\newblock Global output feedback tracking control for a class of {Lagrangian} systems.
\newblock \emph{Automatica}, 36\penalty0 (12):\penalty0 1915--1921, 2000.

\bibitem[Bin et~al.(2022)Bin, Huang, Isidori, Marconi, Mischiati, and Sontag]{Bin2022internal}
M.~Bin, J.~Huang, A.~Isidori, L.~Marconi, M.~Mischiati, and E.~Sontag.
\newblock Internal models in control, bioengineering, and neuroscience.
\newblock \emph{Annual Review of Control, Robotics, and Autonomous Systems}, 5:\penalty0 55--79, 2022.

\bibitem[Byrnes and Isidori(2003)]{BI2003tac}
C.~I. Byrnes and A.~Isidori.
\newblock Limit sets, zero dynamics, and internal models in the problem of nonlinear output regulation.
\newblock \emph{IEEE Transactions on Automatic Control}, 48\penalty0 (10):\penalty0 1712--1723, 2003.

\bibitem[Byrnes et~al.(2003)Byrnes, Gilliam, Isidori, Ikeda, and Marconi]{Byrnes2003internal}
C.~I. Byrnes, D.~S. Gilliam, A.~Isidori, Y.~Ikeda, and L.~Marconi.
\newblock Internal model based design for the suppression of harmonic disturbances.
\newblock In \emph{Directions in Mathematical Systems Theory and Optimization}, pages 51--70. Springer–Verlag, 2003.

\bibitem[Cheah(2008)]{Cheah2008task}
C.-C. Cheah.
\newblock Task-space {PD} control of robot manipulators: {U}nified analysis and duality property.
\newblock \emph{The International Journal of Robotics Research}, 27\penalty0 (10):\penalty0 1152--1170, 2008.

\bibitem[Chen et~al.(1997)Chen, Chang, and Lee]{Chen1997adaptive}
B.~S. Chen, Y.~C. Chang, and T.~C. Lee.
\newblock Adaptive control in robotic systems with ${H}_{\infty}$ tracking performance.
\newblock \emph{Automatica}, 33\penalty0 (2):\penalty0 227--234, 1997.

\bibitem[Davison(1976)]{Davison1976robust}
E.~Davison.
\newblock The robust control of a servomechanism problem for linear time-invariant multivariable systems.
\newblock \emph{IEEE Transactions on Automatic Control}, 21\penalty0 (1):\penalty0 25--34, 1976.

\bibitem[{De Persis} and Jayawardhana(2014)]{Bayu2014}
C.~{De Persis} and B.~Jayawardhana.
\newblock On the internal model principle in the coordination of nonlinear systems.
\newblock \emph{IEEE Transactions on Control of Network Systems}, 1\penalty0 (3):\penalty0 272--282, 2014.

\bibitem[Dirksz and Scherpen(2012)]{Dirksz2012tracking}
D.~A. Dirksz and J.~M.~A. Scherpen.
\newblock On tracking control of rigid-joint robots with only position measurements.
\newblock \emph{IEEE Transactions on Control Systems Technology}, 21\penalty0 (4):\penalty0 1510--1513, 2012.

\bibitem[Francis and Wonham(1976)]{Francis1976internal}
B.~A. Francis and W.~M. Wonham.
\newblock The internal model principle of control theory.
\newblock \emph{Automatica}, 12\penalty0 (5):\penalty0 457--465, 1976.

\bibitem[He and Huang(2021)]{He2021leader}
C.~He and J.~Huang.
\newblock Leader-following consensus for multiple {E}uler--{L}agrange systems by distributed position feedback control.
\newblock \emph{IEEE Transactions on Automatic Control}, 66\penalty0 (11):\penalty0 5561--5568, 2021.

\bibitem[He and Lu(2023)]{He2023output}
X.~He and M.~Lu.
\newblock Output feedback control of uncertain {E}uler--{L}agrange systems by internal model.
\newblock \emph{Automatica}, 156:\penalty0 111189, 2023.

\bibitem[Huang and Chen(2004)]{Huang2004general}
J.~Huang and Z.~Chen.
\newblock A general framework for tackling the output regulation problem.
\newblock \emph{IEEE Transactions on Automatic Control}, 49\penalty0 (12):\penalty0 2203--2218, 2004.

\bibitem[Isidori and Byrnes(1990)]{Isidori1990output}
A.~Isidori and C.~I. Byrnes.
\newblock Output regulation of nonlinear systems.
\newblock \emph{IEEE Transactions on Automatic Control}, 35\penalty0 (2):\penalty0 131--140, 1990.

\bibitem[Isidori et~al.(2012)Isidori, Marconi, and Praly]{Isidori2012robust}
A.~Isidori, L.~Marconi, and L.~Praly.
\newblock Robust design of nonlinear internal models without adaptation.
\newblock \emph{Automatica}, 48\penalty0 (10):\penalty0 2409--2419, 2012.

\bibitem[Jayawardhana and Weiss(2008)]{Bayu2008}
B.~Jayawardhana and G.~Weiss.
\newblock Tracking and disturbance rejection for fully actuated mechanical systems.
\newblock \emph{Automatica}, 44\penalty0 (11):\penalty0 2863--2868, 2008.

\bibitem[Kailath(1980)]{Kailath1980-book}
T.~Kailath.
\newblock \emph{Linear Systems}.
\newblock Prentice-Hall Englewood Cliffs, NJ, 1980.

\bibitem[Kelly(1993)]{Kelly1993simple}
R.~Kelly.
\newblock A simple set-point robot controller by using only position measurements.
\newblock \emph{IFAC Proceedings Volumes}, 26\penalty0 (2):\penalty0 527--530, 1993.

\bibitem[Khalil(2002)]{Khalil2002}
H.~K. Khalil.
\newblock \emph{Nonlinear Systems}.
\newblock New Jersey: Prentice Hall, 3rd edition, 2002.

\bibitem[Li et~al.(2023)Li, Borja, Scherpen, van~der Schaft, and Mahony]{Li2023passivity}
N.~Li, P.~Borja, J.~M.~A. Scherpen, A.~van~der Schaft, and R.~Mahony.
\newblock Passivity-based trajectory tracking and formation control of nonholonomic wheeled robots without velocity measurements.
\newblock \emph{IEEE Transactions on Automatic Control}, 68\penalty0 (12):\penalty0 7951--7957, 2023.

\bibitem[Lor{\'\i}a(2016)]{Loria2016observers}
A.~Lor{\'\i}a.
\newblock Observers are unnecessary for output-feedback control of {Lagrangian} systems.
\newblock \emph{IEEE Transactions on Automatic Control}, 61\penalty0 (4):\penalty0 905--920, 2016.

\bibitem[Loria and Panteley(1999)]{Loria1999force}
A.~Loria and E.~Panteley.
\newblock Force/motion control of constrained manipulators without velocity measurements.
\newblock \emph{IEEE Transactions on Automatic Control}, 44\penalty0 (7):\penalty0 1407--1412, 1999.

\bibitem[Lu et~al.(2019)Lu, Liu, and Feng]{Lu2019}
M.~Lu, L.~Liu, and G.~Feng.
\newblock Adaptive tracking control of uncertain {Euler–Lagrange} systems subject to external disturbances.
\newblock \emph{Automatica}, 104:\penalty0 207--219, 2019.

\bibitem[Mazenc et~al.(1994)Mazenc, Praly, and Dayawansa]{Mazenc1994global}
F.~Mazenc, L.~Praly, and W.~P. Dayawansa.
\newblock Global stabilization by output feedback: {E}xamples and counterexamples.
\newblock \emph{Systems \& Control Letters}, 23\penalty0 (2):\penalty0 119--125, 1994.

\bibitem[Murray et~al.(1994)Murray, Li, and Sastry]{Murray1994}
R.~M. Murray, Z.~Li, and S.~S. Sastry.
\newblock \emph{A Mathematical Introduction to Robotic Manipulation}.
\newblock CRC press, 1994.

\bibitem[Ortega et~al.(1995)Ortega, Loria, Kelly, and Praly]{Ortega1995passivity}
R.~Ortega, A.~Loria, R.~Kelly, and L.~Praly.
\newblock On passivity-based output feedback global stabilization of {Euler-Lagrange} systems.
\newblock \emph{International Journal of Robust and Nonlinear Control}, 5\penalty0 (4):\penalty0 313--323, 1995.

\bibitem[Serrani et~al.(2001)Serrani, Isidori, and Marconi]{Serrani2001semi}
A.~Serrani, A.~Isidori, and L.~Marconi.
\newblock Semi-global nonlinear output regulation with adaptive internal model.
\newblock \emph{IEEE Transactions on Automatic Control}, 46\penalty0 (8):\penalty0 1178--1194, 2001.

\bibitem[Tomizuka(2008)]{Tomizuka2008dealing}
M.~Tomizuka.
\newblock Dealing with periodic disturbances in controls of mechanical systems.
\newblock \emph{Annual Reviews in Control}, 32\penalty0 (2):\penalty0 193--199, 2008.

\bibitem[Wang et~al.(2023)Wang, Zhang, Baldi, and Zhong]{Wang2023leaderless}
S.~Wang, H.~Zhang, S.~Baldi, and R.~Zhong.
\newblock Leaderless consensus of heterogeneous multiple {E}uler--{L}agrange systems with unknown disturbance.
\newblock \emph{IEEE Transactions on Automatic Control}, 68\penalty0 (4):\penalty0 2399--2406, 2023.

\bibitem[Wu et~al.(2021)Wu, Jayawardhana, De~Marina, and Xu]{Wu2021CDC}
H.~Wu, B.~Jayawardhana, H.~G. De~Marina, and D.~Xu.
\newblock Distributed formation control of manipulators’ end-effector with internal model-based disturbance rejection.
\newblock In \emph{2021 60th IEEE Conference on Decision and Control (CDC)}, pages 5568--5575. IEEE, 2021.

\bibitem[Wu et~al.(2022)Wu, Xu, and Jayawardhana]{Wu2022tac}
H.~Wu, D.~Xu, and B.~Jayawardhana.
\newblock On self-learning mechanism for the output regulation of second-order affine nonlinear systems.
\newblock \emph{IEEE Transactions on Automatic Control}, 67\penalty0 (11):\penalty0 5964--5979, 2022.

\bibitem[Wu et~al.(2025)Wu, Jayawardhana, and Xu]{Wu2025task}
H.~Wu, B.~Jayawardhana, and D.~Xu.
\newblock Task-space tracking of robot manipulators via internal model principle approach.
\newblock \emph{Automatica}, 174:\penalty0 112104, 2025.

\bibitem[Yang et~al.(2017)Yang, Fang, Chen, Jiang, and Cao]{Yang2017TAC}
Q.~Yang, H.~Fang, J.~Chen, Z.~P. Jiang, and M.~Cao.
\newblock Distributed global output-feedback control for a class of {Euler-Lagrange} systems.
\newblock \emph{IEEE Transactions on Automatic Control}, 62\penalty0 (9):\penalty0 4855--4861, 2017.

\bibitem[Zarikian and Serrani(2007)]{Zarikian2007harmonic}
G.~Zarikian and A.~Serrani.
\newblock Harmonic disturbance rejection in tracking control of {Euler-Lagrange} systems: An external model approach.
\newblock \emph{IEEE Transactions on Control Systems Technology}, 15\penalty0 (1):\penalty0 118--129, 2007.

\end{thebibliography}

\end{document}